\newcommand{\be}{\begin{equation}}
\newcommand{\ee}{\end{equation}}
\newcommand{\ba}{\begin{array}}
\newcommand{\ea}{\end{array}}
\newcommand{\bea}{\begin{eqnarray}}
\newcommand{\eea}{\end{eqnarray}}
\newcommand{\ra}{\rangle}
\newcommand{\la}{\langle}
\newcommand{\calL}{{\cal L }}
\newcommand{\calE}{{\cal E }}
\newcommand{\calO}{{\cal O }}
\newtheorem{dfn}{Definition}
\newtheorem{prop}{Proposition}
\newtheorem{lemma}{Lemma}
\newtheorem{theorem}{Theorem}
\newcommand{\footremember}[2]{%
    \footnote{#2}
    \newcounter{#1}
    \setcounter{#1}{\value{footnote}}%
}
\newcolumntype{C}[1]{%
 >{\vbox to 4ex\bgroup\vfill\centering}%
 p{#1}%
 <{\egroup}}  
\begin{document}

\title{Classical algorithms for quantum mean values}
\author{Sergey Bravyi  \footremember{ibmyorktown}{IBM Research, Yorktown Heights, NY 10598, 
U.S.A. }
	\and David Gosset  \footremember{iqc}{Institute for Quantum Computing, University of Waterloo, Canada.} 		    \footremember{co}{Department of Combinatorics and Optimization, University of Waterloo, Canada  }
\and Ramis Movassagh  \footremember{ibmcambridge}{IBM Research, Cambridge, MA 02142, U.S.A.}
}

\date{\today}

\maketitle

\begin{abstract}
We consider the task of estimating the expectation value of an $n$-qubit tensor product observable 
$O_1\otimes O_2\otimes  \cdots \otimes O_n$
in the output state of a shallow quantum circuit. 
This task is a cornerstone of variational quantum algorithms for optimization, machine learning, and the simulation of quantum many-body systems. Here we study its computational complexity for constant-depth quantum circuits and
three types of single-qubit observables $O_j$ which are
(a) close to the identity,
(b)  positive semidefinite,
(c)  arbitrary.
It is shown that the mean value problem admits a classical approximation algorithm
with runtime scaling as $\mathrm{poly}(n)$ and  $2^{\tilde{\calO}(\sqrt{n})}$
in cases (a,b) respectively. In case (c) we give a linear-time algorithm
for geometrically local circuits on a two-dimensional grid.
The mean value is approximated with a small relative error in case (a), while
in cases (b,c) we satisfy a less demanding additive error bound.
The algorithms are based on (respectively) Barvinok's polynomial interpolation method,
a polynomial approximation for the OR function arising from quantum query complexity,
and a Monte Carlo method combined with Matrix Product State techniques.
We also prove a technical lemma characterizing a zero-free region for certain polynomials associated with a quantum circuit, which may be of independent interest.
\end{abstract}
\section{Introduction}
In this work we consider the computation of expectation values at the output of a shallow quantum circuit. Suppose we are given an $n$-qubit quantum circuit $U$ of depth $d=\mathcal{O}(1)$ along with $n$ 
single-qubit operators $O_{1},\dots,O_{n}$. If each operator $O_j$ is Hermitian then the tensor product $O\equiv O_{1}\otimes O_{2}\otimes\cdots\otimes O_{n}$ is an observable and its mean value with respect to the state $U|0^n\rangle$ is given by
\be
\label{mean_value}
\mu\equiv\langle0^{n}|U^{\dagger}OU|0^{n}\rangle.
\ee
The mean value $\mu$ can be efficiently estimated on a quantum computer
by repeatedly preparing the state $U|0^n\ra$, measuring each single-qubit observable $O_j$, 
and averaging the product of the
measured eigenvalues.

The mean value problem, i.e., estimation of $\mu$, for tensor product observables
is a common step of NISQ\footnote{Noisy Intermediate Scale Quantum} era quantum algorithms~\cite{preskill2018quantum}
since the readout requires only single-qubit operations
which tend to be more reliable than two-qubit gates. For example, variational quantum algorithms such as VQE\footnote{Variational Quantum Eigensolver}~\cite{Peruzzo2014}
or QAOA\footnote{Quantum Approximate Optimization Algorithm}~\cite{farhi2014quantum}
aim at minimizing the expected energy $\la 0^n|U^\dag H U|0^n\ra$,
where $H$ is a Hamiltonian and $U$ is a shallow quantum circuit chosen from a suitable variational class.
In many interesting cases, such as quantum chemistry simulations~\cite{Peruzzo2014,Kandala2017,mcclean2016theory,nam2019ground},
the Hamiltonian $H$ can be
written as a linear combination of $\mathrm{poly}(n)$ Pauli operators, and so the expected energy $\la 0^n|U^\dag H U|0^n\ra$ is a sum
of $\mathrm{poly}(n)$ mean values $\mu$ of the form Eq.~\eqref{mean_value}. The mean value $\mu$ can also represent an  output probability
of the quantum circuit, i.e., the probability of observing a particular measurement outcome
if we prepare the state $U|0^n\ra$ and measure some of the qubits
in the standard basis. In this case each observable $O_j$ is either a projector
$|0\ra\la 0|$, $|1\ra\la 1|$ or the single-qubit identity operator. The estimation of output probabilities is a key step in variational quantum classifiers~\cite{schuld2019quantum,havlivcek2019supervised} where the mean value of the observable $O=|0\ra\la 0|^{\otimes n}$ encodes a single entry of the classifier kernel function. These and other quantum algorithms have prompted the development and demonstration of several error mitigation schemes which target a reliable mean value estimation in the presence of noise ~\cite{temme2017error,li2017efficient,li2018practical,otten2018recovering,bonet2018low,kandala2018extending}.

Given the current enthusiasm for variational quantum algorithms, it is natural to question whether or not they can be more powerful than classical algorithms in some sense. Unfortunately, empirical comparisons are limited by the fact that large scale quantum computers are not yet available. Furthermore, heuristic quantum algorithms are challenging to analyze mathematically and generally do not have performance guarantees. In this paper we investigate the computational power of variational quantum algorithms
based on constant-depth circuits 
 by asking whether or not they are ``easy" to simulate on a classical computer. \textit{Does the mean value problem require a quantum computer?} 

We note that the computational complexity of estimating mean values depends crucially on the type of approximation guarantee requested. One may ask for an
approximation $\tilde{\mu}$ which achieves a small \textit{additive error} $\delta$, i.e., $|\tilde{\mu} - \mu|\le \delta$. Here we assume that the single-qubit observables are normalized in the sense that $\|O_j\|\le 1$.  If $U$ is a polynomial-size circuit, this task is BQP-complete almost by definition. On the other hand, in the case of interest---where $U$ is described by a constant-depth quantum circuit---it captures the power of a restricted model of quantum computation, which consists of repeating a constant-depth quantum computation followed by single-qubit measurements a polynomial number of times and averaging the results. As discussed above, this describes a typical step of variational quantum algorithms.  

Alternatively, one may ask for a much more stringent approximation $\tilde{\mu}$ which achieves a small \textit{relative error} $\delta$, i.e., $e^{-\delta} \mu \leq |\tilde{\mu}-\mu|\leq e^{\delta} \mu$. This is clearly at least as difficult as computing an additive error approximation in the case of normalized observables. In fact, this task is \#P-hard, and therefore intractable, for a general constant-depth circuit $U$ and $\delta=\calO(1)$\footnote{ A standard reduction using postselected gate teleportation \cite{terhal2004adaptive} shows that the problem is just as hard as its generalization in which $U$ is given by any circuit of polynomial size , which is \#P-hard \cite{goldberg2017complexity}.}.   Thus, a classical algorithm which computes an additive error approximation of $\mu$ is simulating an efficient quantum computation, while a classical algorithm which computes a relative error approximation is solving a more challenging problem. 

In this paper we consider the complexity of the mean value problem as a function of circuit depth, qubit connectivity, the structure of observables $O_j$, and the type of approximation. We describe classical algorithms for three special cases as detailed below and summarized in Table \ref{table:comparison}. Our results clarify the circumstances in which variational quantum algorithms may provide a quantum advantage. Some good news is that the mean value problem with super-constant depth quantum circuits remains out of reach for classical computers as far as we know. However, constant-depth circuits on a 2D or 3D grid are not as powerful as may have previously been expected: the corresponding mean value problem can be solved classically in time $\calO(n)$ and $2^{\calO(n^{1/3})}$ respectively. We also find that, for general constant-depth circuits without geometric locality, mean value problems with positive semidefinite observables---including e.g., the additive error approximation of output probabilities---can be solved on a classical computer in subexponential time $2^{\calO(n^{1/2})}$. Our results suggest that  achieving a  quantum advantage 
with variational quantum algorithms requires either a super-constant circuit depth (e.g., $d\sim \log{n}$) or qubit connectivity graphs that cannot be  locally embedded in a 2D grid, or observables that cannot be expressed as linear combinations of $\mathrm{poly}(n)$
tensor product operators. 

\begin{table}[t]
\small
  \begin{center}
    \caption{Complexity of the mean value problem}
    \label{table:comparison}
    \begin{tabular}{|C{3.7cm}|c|c|c|}
\hline 
      \textbf{Quantum circuit} $U$ & \textbf{Observables} $O_j$ & \textbf{Relative error}  & \textbf{Additive error }\\
	      \hline
     Polynomial size & Pos. semidefinite & \#P-hard \cite{goldberg2017complexity} & BQP-complete\\
		\hline
     Constant depth & Close to $I$ &  P [Thm. \ref{thm:main}] & P [Thm. \ref{thm:main}]\\
		\hline
     Constant depth & Pos. semidefinite & \#P-hard \cite{goldberg2017complexity, terhal2004adaptive} &  \begin{tabular}{@{}c@{}}BQP \\ Subexp. classical  [Thm. \ref{thm:additive}]\end{tabular}\\
		\hline
     2D Constant depth & Hermitian &\begin{tabular}{@{}c@{}}\#P-hard \cite{goldberg2017complexity, terhal2004adaptive}  \\ Subexp. classical \cite{markov2008simulating}\end{tabular}&  BPP [Thm. \ref{thm:MPS}] \\
\hline
    \end{tabular}
  \end{center}
\end{table}
\subsection{Summary of results}
\paragraph{Algorithm (a): Each $O_j$ is close to the identity}
Our first result concerns the special case of the mean value problem where each 
observable $O_{j}$ is close to the identity operator in the sense that 
\begin{equation}
\|O_j-I\|\leq \calO(2^{-5d}).
\label{eq:condidentity}
\end{equation}
Recall that $d=\mathcal{O}(1)$ denotes the circuit depth. 
For observables satisfying Eq.~\eqref{eq:condidentity} we describe a classical deterministic algorithm
that approximates $\mu$ to within a relative error $\delta$. The runtime of the algorithm scales  
polynomially in the number of qubits $n$ and $\delta^{-1}$.
Note that while we are primarily interested in the case where $O_j$ are Hermitian, 
our algorithm is  not restricted to this case.

The condition Eq.~(\ref{eq:condidentity}) can be satisfied in the 
case of very noisy measurements. For example, suppose a bit-flip 
channel $\calE(\rho)=(1-p)\rho + p X\rho X$ 
is applied to each qubit immediately  before the measurement. 
Here $p\in [0,1/2]$ is the error rate.
Consider a noisy mean value
\[
\mu_p \equiv \la 0^n | \calE^{\otimes n} (U|0^n\ra\la 0^n|U^\dag) |0^n\ra.
\]
A simple calculation shows that $\mu_p=2^{-n} \mu$, where $\mu$ is 
the ideal mean value 
defined by Eq.~(\ref{mean_value}) with the observables
$O_j=I+(1-2p)Z$. 
Thus our algorithm  approximates the noisy mean value $\mu_p$
with a small relative error in the strong noise regime
\[
\frac12 - \calO(2^{-5d}) \le p \le \frac12.
\]
We envision that observables $O_j$ satisfying Eq.~(\ref{eq:condidentity})
could be measured for verification purposes while executing a variational
quantum algorithm. Indeed, a typical step of such an algorithm
repeatedly prepares a variational
state $U|0^n\ra$ and measures all qubits in the standard basis. 
The measurement data collected by the quantum algorithm 
can be used to approximate the mean
value $\mu$ defined in Eq.~(\ref{mean_value}) for any observables $O_j$ diagonal in the standard basis,
for example,  $O_j=e^{i\theta Z}$.
The verification step would compare the mean value $\mu$ inferred 
from the measurement data and the approximation $\tilde{\mu}$
computed by the classical algorithm. 
The latter can be computed efficiently whenever $O_j$ obeys Eq.~(\ref{eq:condidentity}),
that is, $|\theta|=\calO(2^{-5d})$.
An attractive feature of this method is that the 
verification step and the algorithm that is being verified
access the same measurement data. Thus no additional
quantum operations are required. 

The algorithm works by classically computing mean values 
$\mu_S=\la 0^n|U^\dag \prod_{j\in S} O_j U|0^n\ra$ for all 
subsets of qubits $S$ of size up to $\calO(\log{(\delta^{-1}n)})$
satisfying a suitable connectivity property. We show that the number of such 
subsets is at most $\mathrm{poly}(n)$.
Each mean value $\mu_S$ can be computed by restricting the circuit $U$
onto the ``lightcone" of $S$. It is shown that the restricted circuit
can be simulated classically in time $\mathrm{poly}(n)$. 
The desired approximation $\tilde{\mu}$ is obtained by combining 
the mean values $\mu_S$ using 
the polynomial interpolation lemma due to Barvinok \cite{barvinok2016combinatorics}.
To this end, we define a degree-$n$ polynomial $f(\epsilon)=\langle0^{n}|U^{\dagger}O(\epsilon)U|0^{n}\rangle$,
where  $O(\epsilon)$ is the tensor product of observables
$O_{j}(\epsilon)=I+\epsilon(O_{j}-I)$.
Note that $f(0)=1$ and $f(1)=\mu$.

Barvinok's lemma implies that if the polynomial $f(\epsilon)$ is zero-free
in a disk $|\epsilon|\le \beta$ for some constant
$\beta>1$, then $\log f(1)=\log\mu$ can be approximated with an additive error $\delta$ from the Taylor expansion of $\log f(\epsilon)$
at $\epsilon=0$ truncated at the order $p=\calO(\log{(\delta^{-1}n)})$.
We show that the coefficients of the Taylor series of $\log f(\epsilon)$
to the $p$-th order are simply related to the mean values $\mu_S$
computed at the first stage of the algorithm. 
The main technical step in applying Barvinok's lemma is establishing the
zero-freeness condition.  For a depth-$d$ quantum circuit composed of two-qubit gates
we show that $f(\epsilon)$ is zero-free in a disk of radius  $\beta=\epsilon_0/\gamma$, where
$\gamma\equiv \max_j \|O_j-I\|$ and 
 $\epsilon_0=\Omega(2^{-5d})$.
 We prove this by  constructing a probability distribution over $2n$-bit strings $p_{\epsilon}(z)$ 
 such that $p_{\epsilon}(0^{2n})$ is proportional to $|f(\epsilon)|^2$.  The Lov\'{a}sz Local Lemma is 
 then applied to show that the probability $p_{\epsilon}(0^{2n})$ is strictly positive
 for $|\epsilon|\leq \beta$. This proves $f(\epsilon)\ne 0$ for $|\epsilon|\le \beta$.
The inverse exponential scaling of $\epsilon_0$ with $d$ is shown to be  optimal.
On the other hand, we show that a random unitary $U$ satisfying the $2$-design property 
typically has a much larger zero-free radius
$\beta=\epsilon_0/\gamma$, where 
$\epsilon_0\ge 1-\calO\left(\frac{\log(n)}{n}\right)$. This result suggests that the worst-case bound on
the applicability region of our algorithm established in Eq.~(\ref{eq:condidentity})
is unlikely to be tight for the vast majority of circuits. 

We note that a similar algorithm, also based on Barvinok's approach, was previously proposed for approximating output probabilities of IQP circuits composed of gates which are sufficiently close to the identity \cite{mann2018approximation}.

\paragraph{ Algorithm (b): Each $O_j$ is positive semidefinite}

Our next result is a classical algorithm that approximates the mean value 
$\mu$ to within a given additive error $\delta$, for general constant-depth circuits $U$ and
positive semidefinite tensor product observables. More precisely, we assume
that $\|O_j\|\le 1$ and  $O_j\ge 0$ for all $j$. The algorithm has runtime
exponential in $\sqrt{n\log(\delta^{-1})}$, with a prefactor logarithmic in $n$.

Our result also sheds some light on the more general additive error mean value problem for shallow quantum circuits. For more general Hermitian observables $O_j$ which may not be positive semidefinite, our algorithm outputs an approximation to the \textit{absolute value} $|\langle 0^n|U^{\dagger} O U|0^n\rangle|$ to within an additive error $\delta$. Thus our algorithm would provide an additive error estimate to the mean value \textit{if we only knew the sign!}  It is an open question whether or not this more general case admits a subexponential classical algorithm.

The algorithm is based on approximating the projector onto the output state $|\psi\rangle\equiv U|0^n\rangle$ of the quantum circuit by an operator which has a subexponential classical description size. Similar ideas were used in Ref.~\cite{eldar2017local} to establish a certain expansion property of the probability distribution obtained by measuring $\psi$ in the standard basis. To explain the main idea, let us specialize to the case 
$O=|0^n\rangle\langle0^n|$ in which we aim to estimate an output probability
\begin{equation}
\mu=|\langle 0^n|\psi\rangle |^2
\label{eq:prob}
\end{equation}
of a shallow quantum circuit. The output state $\psi$ of the quantum circuit is the unique state which is orthogonal to each of the commuting projectors $U|1\rangle\langle 1|_j U^{\dagger}$ for $1\leq j\leq n$. These operators are simultaneously diagonalized in the basis $\{|\hat{z}\rangle=U|z\rangle: z\in \{0,1\}^n\}$. The projector $I-|\psi\rangle \langle \psi|$ has the property that it computes the multivariate OR function in this basis, in the sense that
\begin{equation}
\left(I-|\psi\rangle \langle \psi|\right) |\hat{z}\rangle=\mathrm{OR}(z)|\hat{z}\rangle,
\label{eq:orfunction}
\end{equation}
where $\mathrm{OR}(z)$ is zero iff $z=0^n$. Consequently, we obtain an $\delta$-approximation $P$ to the projector $|\psi\rangle \langle \psi|$ (in the spectral norm) by plugging in an $\delta$-approximation of the multivariate OR function on the RHS of Eq.~\eqref{eq:orfunction}. Using an optimal polynomial approximation derived from quantum query complexity \cite{buhrman1999bounds, de2008note}, one obtains an operator $P$ which is a sum of $2^{\tilde{\calO}(\sqrt{n\log(\delta^{-1})})}$ terms each acting nontrivially on $2^{\tilde{\calO}(\sqrt{n\log(\delta^{-1})})}$ qubits. The algorithm outputs the estimate $\tilde{\mu}=\langle 0^n|P|0^n\rangle$ of the mean value Eq.~\eqref{eq:prob}, which can be computed exactly in time $2^{\tilde{\calO}(\sqrt{n\log(\delta^{-1})})}$.

\paragraph{Algorithm (c): The circuit is geometrically local in 2D or 3D}

Our final result is a classical randomized  algorithm
that approximates the mean value $\mu$ defined in Eq.~(\ref{mean_value}) to within an additive error $\delta$ 
for any single-qubit operators $O_j$ satisfying $\|O_j\|\le 1$. This algorithm only applies to constant-depth geometrically local quantum circuits,
i.e., circuits with nearest-neighbor gates on a $D$-dimensional grid of qubits. From a practical perspective, the most interesting cases are $D=2$ and $D=3$.
In the 2D case  our algorithm achieves a polynomial  runtime  $\calO(n\delta^{-2})$.
The scaling with $n$ is optimal since one needs a time linear in $n$ simply to examine
each gate in the circuit.  However, the $\calO$ notation hides a very large constant
factor that limits practical applications of the algorithm.
For comparison, state-of-the-art tensor network 
simulators~\cite{markov2008simulating,pednault2017breaking,boixo2017simulation,villalonga2019establishing}
enable simulation of medium size constant-depth 2D circuits 
with $n\sim 100$
but have a super-polynomial asymptotic runtime $2^{\calO(\sqrt{n})}$.
In the 3D case our algorithm achieves a sub-exponential runtime $\delta^{-2} 2^{\calO(n^{1/3})}$.
We believe that accomplishing the same simulation using  the standard tensor network methods~\cite{markov2008simulating,aaronson2017complexity}
would require time $2^{\calO(n^{2/3})}$.

The main idea behind the algorithm is to express the mean value
as  $\mu =\la \Psi_0|W|\Psi_1\ra$, where
$\Psi_0,\Psi_1$ are Matrix Product States  (MPS) of $n$ qubits with bond dimension
$\calO(1)$, and $W$ is a permutation of $n$ qubits.
We then approximate $\mu$ using a Monte Carlo algorithm similar to the one 
proposed by Van den Nest~\cite{Nest2009}. 
It is based on the identity
\[
\mu = \la \Psi_0|W|\Psi_1\ra = \sum_{x\in \{0,1\}^n} \; \pi(x) F(x),
\]
where $\pi(x)=|\la x|\Psi_0\ra|^2$ and $F(x)=\la x|W|\Psi_1\ra \la x|\Psi_0\ra^{-1}$.
Using the standard MPS algorithms one can compute the quantity $F(x)$
for any given $x$ in time $\calO(n)$. Likewise, one can sample $x$ from the probability
distribution $\pi(x)$ in time $\calO(n)$. A simple calculation shows that the variance
of a random variable
$F(x)$ with $x$ drawn from $\pi(x)$ is  at most one. 
Thus one can approximate $\mu$ by an empirical mean value
$\tilde{\mu}=S^{-1}\sum_{i=1}^S F(x^i)$, where $x^1,\ldots,x^S$ are 
independent samples from $\pi(x)$ and $S=\calO(\delta^{-2})$.
The 3D simulation algorithm follows the same idea except that the
required MPS bond dimension is $2^{\calO(n^{1/3})}$.

\subsection{Open problems}

A central open question raised by this work is whether the quantum mean value problem 
can be solved efficiently on a classical computer
in the case of shallow circuits, tensor product observables, and additive approximation error.
Alternatively, can we provide some evidence that this task is classically hard? This question directly addresses the computational power of variational quantum algorithms based on shallow circuits.

 To shed light on this problem, one may ask whether our algorithms can be improved or generalized. For example, can the subexponential algorithm for positive semi-definite observables be generalized to Hermitian tensor-product observables? Can the additive error mean value problem for 3D shallow circuits be solved in polynomial time on a classical computer? Can the runtime of our algorithms for the 2D and 3D shallow circuits be reproduced using 
 simulators based on tensor network contraction~\cite{markov2008simulating}?
 
 Another interesting question is whether large-scale instances of the quantum mean value problem
 can be solved by hybrid quantum-classical algorithms with limited quantum resources
 (e.g. small number of qubits).  For example,
a promising class of hybrid algorithms known as holographic quantum simulators was 
recently proposed~\cite{kim2017holographic,kim2017noise}.
Loosely speaking,  such algorithms enable a simulation of 2D lattice models on a 1D quantum computer by converting one spatial dimension into time. We anticipate that a similar approach can be
used to solve $n$-qubit instances of the quantum mean value problem with 2D shallow circuits
on a quantum computer with only $O(n^{1/2})$ qubits. 
Even though Theorem~\ref{thm:MPS} provides a purely classical linear time algorithm
for the problem, its runtime has a very unfavorable scaling with the circuit depth.
Hybrid algorithms may potentially remedy this inefficiency.

One may also further probe the complexity of the relative error mean value problem for shallow circuits. While this problem is known to be \#P-hard in the worst case, it is interesting to elucidate broad classes of quantum circuits for which the problem can be solved efficiently. 
For example, it can be easily shown that the mean value problem admits a  polynomial time
classical algorithm for Pauli-type observables and quantum circuits that belong to the 3rd level of
the Clifford hierarchy~\cite{gottesman1999demonstrating}.
At the same time, approximating output probabilities of such circuits is known to be
 \#P-hard in the worst case~\cite{bremner2016average}.

Random quantum circuits are another possible avenue to explore. 
For example, it has been conjectured that relative error approximation of output probabilities is \#P-hard 
for random quantum circuits of sufficiently high depth~\cite{boixo2018characterizing,bouland2019complexity,movassagh2019cayley}. Does our bound Eq.~\eqref{eq:randbound} on the zero-free disk for the polynomial $f(\epsilon)$ when $U$ is a random circuit have any bearings on this conjecture? Finally, it may be possible to improve our lower bound $\epsilon_0=\Omega(2^{-5d}\gamma^{-1})$ on the zero-free radius for depth-$d$ circuits, although it cannot be improved beyond $\Omega(2^{-d})$ due to the example described in Section \ref{sec:zerofree}.

\section{Notation}

Let $[n]=\{1,2,\dots,n\}$. Given a subset of qubits $S\subseteq[n]$,
let $\mathcal{A}_{S}$ be an operator algebra that consists of all $n-$qubit operators that only act nontrivially on $S$; also let
$\mathcal{A}_{j}\equiv\mathcal{A}_{\{j\}}$. Consider a fixed unitary $U$. 
\begin{dfn}
The (forward) lightcone of a qubit $j$, denoted by $\mathcal{L}(j)$, is the
smallest subset of qubits $\mathcal{L}(j)\subseteq[n]$ such that
$j\in\mathcal{L}(j)$ and $U^{\dagger}\mathcal{A}_{j}U\subseteq\mathcal{A}_{\mathcal{L}(j)}$. For any subset $S\subseteq [n]$ we define 
\[
\mathcal{L}(S)=\cup_{j\in S} \mathcal{L}(j).
\]
We also define the backward lightcone $\mathcal{L}_{\leftarrow}(S)$ by replacing $U$ with $U^{\dagger}$ in the above. 
\end{dfn}
Therefore, for any $S\subset [n]$, the unitary $U$ maps any operator acting
on $S$ to an operator supported on $\mathcal{L}(S)$, and the circuit $U^{\dagger}$ maps any operator supported on $S$ to an operator supported on $\mathcal{L}_\leftarrow (S)$.

The forward and backward lightcones have the following symmetry.
\begin{prop}
Let $j,k\in [n]$. Then $j\in \mathcal{L}(k)$ if and only if $k\in \mathcal{L}_{\leftarrow}(j)$.
\label{prop:sym}
\end{prop}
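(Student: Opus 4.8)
The plan is to deduce the statement from a standard fact about commutants of local operator algebras, combined with the minimality built into the definition of the lightcone. The case $j=k$ is immediate, since $k\in\mathcal{L}(k)$ and $j\in\mathcal{L}_{\leftarrow}(j)$ by definition, so from here on assume $j\neq k$.

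First I would record that $\mathcal{L}(k)$ really is the \emph{smallest} admissible set: because $\mathcal{A}_{S_1}\cap\mathcal{A}_{S_2}=\mathcal{A}_{S_1\cap S_2}$, the intersection of all subsets $S\ni k$ with $U^{\dagger}\mathcal{A}_k U\subseteq\mathcal{A}_S$ is again admissible. This yields the reformulation, valid for $j\neq k$,
\[
j\notin\mathcal{L}(k)\quad\Longleftrightarrow\quad U^{\dagger}\mathcal{A}_k U\subseteq\mathcal{A}_{[n]\setminus\{j\}},
\]
where the forward direction uses $\mathcal{L}(k)\subseteq[n]\setminus\{j\}$ and the reverse direction uses that $[n]\setminus\{j\}$ is then itself an admissible set (here one needs $k\in[n]\setminus\{j\}$, i.e. $j\neq k$). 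Symmetrically, $k\notin\mathcal{L}_{\leftarrow}(j)\Longleftrightarrow U\mathcal{A}_j U^{\dagger}\subseteq\mathcal{A}_{[n]\setminus\{k\}}$.

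The core of the argument is the identity $\mathcal{A}_{S}'=\mathcal{A}_{[n]\setminus S}$ for the commutant inside $B((\CC^2)^{\otimes n})$, which follows from the fact that the commutant of $M_{2^{|S|}}\otimes I$ is $I\otimes M_{2^{n-|S|}}$ in the bipartition $S\,|\,[n]\setminus S$. Applying this with $S=\{j\}$, the inclusion $U^{\dagger}\mathcal{A}_k U\subseteq\mathcal{A}_{[n]\setminus\{j\}}=\mathcal{A}_j'$ says exactly that $[\,U^{\dagger}AU,\,B\,]=0$ for every $A\in\mathcal{A}_k$ and $B\in\mathcal{A}_j$. Conjugating each of these commutation relations by $U$ gives $[\,A,\,UBU^{\dagger}\,]=0$ for all such $A,B$, i.e. $U\mathcal{A}_j U^{\dagger}\subseteq\mathcal{A}_k'=\mathcal{A}_{[n]\setminus\{k\}}$; the converse is the same computation with $U\leftrightarrow U^{\dagger}$. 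Chaining the equivalences gives $j\notin\mathcal{L}(k)\Longleftrightarrow k\notin\mathcal{L}_{\leftarrow}(j)$, and the contrapositive is the claim.

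I do not expect a genuine obstacle here. The only points that need a little care are (i) confirming that a smallest admissible set exists so the ``$\notin$'' reformulation is legitimate, and (ii) carrying out the conjugation by $U$ at the level of commutators of individual operators, rather than naively on a set inclusion. Everything else is bookkeeping. (An alternative route goes through Pauli expansions, using that $j$ lies in the support of an operator $O$ iff some Pauli string with $j$ in its support has nonzero coefficient in $O$, together with $\tr(U^{\dagger}QUR)=\tr(QURU^{\dagger})$; but the commutant argument is basis-free and shorter.)
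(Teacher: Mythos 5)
Your proof is correct and takes essentially the same route as the paper: both reduce to the case $j\neq k$, argue the contrapositive by translating $j\notin\mathcal{L}(k)$ into the statement that $U^{\dagger}\mathcal{A}_k U$ commutes with $\mathcal{A}_j$, and then conjugate the commutation relations by $U$ to reverse the roles. You spell out two points the paper leaves implicit---the minimality/existence of the lightcone via $\mathcal{A}_{S_1}\cap\mathcal{A}_{S_2}=\mathcal{A}_{S_1\cap S_2}$, and the commutant identity $\mathcal{A}_{S}'=\mathcal{A}_{[n]\setminus S}$---but the underlying argument is the same.
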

\begin{proof}
The statement is clearly true if $j=k$, so consider the case $j\neq k$. Below we show that $k\notin \mathcal{L}_{\leftarrow}(j)$ implies $j\notin \mathcal{L}(k)$, which establishes the ``only if" direction. The ``if" direction then follows as it is the same statement with $U$ replaced by $U^{\dagger}$. 

So suppose $k\notin \mathcal{L}_{\leftarrow}(j)$. Equivalently, any operator in $\mathcal{A}_k$ commutes with any operator in $U\mathcal{A}_{j}U^{\dagger}$.  Equivalently, any operator in $U^{\dagger}\mathcal{A}_k U$ commutes with any operator in $\mathcal{A}_{j}$, which is the statement that $j\notin \mathcal{L}(k)$.
\end{proof}
It will also be convenient to define iterated lightcones.
\begin{dfn}
Given a unitary $U$, define iterated forward and backward lightcones  of $S\subseteq [n]$
\begin{align*}
\mathcal{L}(S,1)&=\mathcal{L}(S) \qquad  &&\mathcal{L}_{\leftarrow}(S,1)=\mathcal{L}_{\leftarrow}(S)\\
\mathcal{L}(S,2)&=\mathcal{L}_{\leftarrow}(\mathcal{L}(S))\qquad  &&\mathcal{L}_\leftarrow(S,2)=\mathcal{L}(\mathcal{L}_{\leftarrow}(S))\\
\mathcal{L}(S,3)&=\mathcal{L}(\mathcal{L}_{\leftarrow}(\mathcal{L}(S))) \qquad &&\mathcal{L}_{\leftarrow}(S,3)=\mathcal{L}_{\leftarrow}(\mathcal{L}(\mathcal{L}_{\leftarrow}(S)))\\
& \vdots && \qquad \vdots
\end{align*}
\end{dfn}
We also define the maximum iterated lightcone sizes 
\begin{dfn}
For each positive integer $c$ define
\[
\ell_{c}=\max\{\max_{1\leq j\leq n}|\mathcal{L}(j,c)|, \max_{1\leq j\leq n}|\mathcal{L}_{\leftarrow}(j,c)| \}
\]
\end{dfn}
The quantities $\ell_{c}$ quantify the growth of the lightcone under repeated applications of $U$ or $U^{\dagger}$.  Clearly we have the upper bound
\begin{equation}
\ell_{c}\leq (\ell_{1})^{c}.
\label{eq:naive}
\end{equation}
Indeed, if $U$ is a depth-$d$ circuit composed of two-qubit gates then 
\begin{equation}
\ell_{c}\leq 2^{cd} \qquad c\geq 1, \qquad \textbf{ (lightcone growth, depth-$d$ circuit)}
\label{eq:ellconst}
\end{equation}
In some cases Eq.~\eqref{eq:naive} is a poor upper bound. For example, if all gates are restricted to be nearest-neighbor two-qubit gates on a $D$-dimensional grid then we have
\begin{equation}
\ell_{c}\leq (2cd)^{D} \qquad c\geq 1,  \qquad \textbf{ (lightcone growth, $D$-dimensions)}
\label{eq:elld}
\end{equation}
For our purposes the most important distinguishing feature of constant depth circuits is that $\ell_{c}=\mathcal{O}(1)$ for any constant $c$.

\section{Simulation by polynomial interpolation}

Let us define the polynomial
\begin{equation}
f(\epsilon)=\langle0^{n}|U^{\dagger}O(\epsilon)U|0^{n}\rangle\label{eq:f_epsilon}
\end{equation}
where $\epsilon\in\mathbb{C}$ and $O(\epsilon)\equiv\bigotimes_{j=1}^{n}O_{j}(\epsilon)$
and
\[
O_{j}(\epsilon)=I+\epsilon(O_{j}-I).
\]
Clearly, $f(0)=1$ and $f(1)=\mu$ is the quantity we wish to approximate. Our main result is the following theorem.
\begin{theorem}
There exists a deterministic classical algorithm that takes as inputs
a quantum circuit $U$ acting on $n$ qubits, an error
tolerance $\delta>0$, and a product operator $O=\bigotimes_{j=1}^{n}O_{j}$
such that 
\begin{equation}
\|O_{j}-I\|\le \frac{1}{60\beta\cdot \ell_{1}\cdot \ell_{4}}
\label{eq:ocond}
\end{equation}
for all $j$, where $\beta>1$ is an absolute constant. The algorithm outputs a complex number $\tilde{\mu}$
that approximates $\mu=\langle0^{n}|U^{\dagger}OU|0^{n}\rangle$ with
a multiplicative error $\delta$, that is
\be
\label{log_approx}
|\log\mu-\log\tilde{\mu}|\le\delta.
\ee
The running time of the algorithm is $(n\delta^{-1})^{\mathcal{O}(\ell_{1})}$.
\label{thm:main}
\end{theorem}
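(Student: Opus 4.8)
The plan is to compute $\mu=f(1)$ by Barvinok's polynomial interpolation method applied to the degree-$\le n$ polynomial $f(\epsilon)$ of Eq.~\eqref{eq:f_epsilon}, for which $f(0)=1$. Expanding the tensor product gives
\[
f(\epsilon)=\sum_{S\subseteq[n]}\epsilon^{|S|}\,g(S),\qquad g(S):=\langle 0^n|U^\dagger\left(\bigotimes_{j\in S}(O_j-I)\right)U|0^n\rangle,
\]
with $g(\emptyset)=1$. The most substantial step, and the main obstacle, is the \emph{zero-free lemma}: under the hypothesis Eq.~\eqref{eq:ocond}, $f(\epsilon)\neq 0$ on the closed disk $|\epsilon|\le\beta$, where $\beta>1$ is the same absolute constant. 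I would prove it along the lines sketched in the introduction: writing $|f(\epsilon)|^2=\langle 0^{2n}|(U\otimes\overline U)^\dagger\big(O(\epsilon)\otimes\overline{O(\epsilon)}\big)(U\otimes\overline U)|0^{2n}\rangle$ and rescaling each of the $n$ local factors, one exhibits an explicit probability distribution $p_\epsilon$ on $2n$-bit strings with $p_\epsilon(0^{2n})$ proportional to $|f(\epsilon)|^2$, in which qubit $j$ carries a ``bad event'' of probability $\mathcal{O}(\|O_j-I\|\,|\epsilon|)$ and two bad events interact only when the supports of the associated doubled operators overlap --- giving a dependency graph of degree $\mathcal{O}(\ell_1\ell_4)$, which is the origin of the factor $\ell_1\ell_4$ in Eq.~\eqref{eq:ocond}. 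The Lov\'{a}sz Local Lemma then forces $p_\epsilon(0^{2n})>0$, hence $f(\epsilon)\neq 0$, exactly when $\beta\|O_j-I\|$ is below the threshold $\tfrac{1}{60\,\ell_1\ell_4}$; pinning down the dependency structure and constants precisely enough to reach this threshold is the delicate point. Since $\beta>1$, the segment $[0,1]$ lies inside the zero-free disk, so $\log f$ has an analytic branch near $[0,1]$ with $\log f(0)=0$; in particular $\mu=f(1)\neq 0$ and $\log\mu:=\log f(1)$ is well defined (and real when the $O_j$ are Hermitian).

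Granted the zero-free lemma, Barvinok's lemma~\cite{barvinok2016combinatorics} applies: for a degree-$\le n$ polynomial with $f(0)=1$ and no zeros in $|\epsilon|\le\beta$, $\beta>1$, the order-$p$ Taylor truncation $T_p[\log f]$ of $\log f$ at the origin obeys $|\log f(1)-T_p[\log f](1)|\le \tfrac{n}{p+1}\cdot\tfrac{\beta^{-p}}{1-\beta^{-1}}$. Choosing $p=\lceil c\log(n\delta^{-1})\rceil$ for a suitable absolute constant $c=c(\beta)$ makes this at most $\delta$. It therefore remains to compute $T_p[\log f](1)=\sum_{m=1}^{p}b_m$, with $b_m=[\epsilon^m]\log f(\epsilon)$, to additive accuracy $\mathcal{O}(\delta)$ in time $(n\delta^{-1})^{\mathcal{O}(\ell_1)}$.

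The structural point is that $T_p[\log f]$ is assembled from \emph{local} data. Introduce the interaction graph $G$ on $[n]$ with $j\sim k$ iff $\mathcal{L}(j)\cap\mathcal{L}(k)\neq\emptyset$. If $S$ splits into $G|_S$-connected components $C_1,\dots,C_r$, these have pairwise disjoint forward lightcones, so $U^\dagger\big(\bigotimes_{j\in S}(O_j-I)\big)U=\bigotimes_{i}\big(U^\dagger(\bigotimes_{j\in C_i}(O_j-I))U\big)$ and hence $g(S)=\prod_i g(C_i)$. Thus $f(\epsilon)$ is the partition function of an abstract polymer model --- polymers are $G$-connected subsets $C$ with weight $\epsilon^{|C|}g(C)$, incompatibility is $G$-adjacency --- and the standard cluster (Mayer) expansion gives $\log f(\epsilon)=\sum_{W\ G\text{-conn}}\lambda_W(\epsilon)$ with $\lambda_W(\epsilon)=\mathcal{O}(\epsilon^{|W|})$ and $\lambda_W$ determined by $\{g(C):C\subseteq W\}$. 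Concretely, $\lambda_W$ modulo $\epsilon^{p+1}$ is obtained from the recursion $\lambda_W=\log\big(\sum_{S\subseteq W}\epsilon^{|S|}g(S)\big)-\sum_{W'\subsetneq W,\ G\text{-conn}}\lambda_{W'}$, processing $W$ in increasing order of size; the locality claim $\lambda_W=\mathcal{O}(\epsilon^{|W|})$ follows from the factorization of $g$ over disconnected pieces. By Proposition~\ref{prop:sym} the $G$-neighborhood of any $j$ is contained in $\mathcal{L}(j,2)$, so $G$ has maximum degree $<\ell_2\le\ell_1^2$, and consequently the number of $G$-connected subsets $W$ with $|W|\le p$ is at most $n(e\ell_2)^p=(n\delta^{-1})^{\mathcal{O}(\log\ell_1)}$ --- polynomial in $n,\delta^{-1}$ for constant-depth $U$ --- and they can be enumerated within that bound.

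Putting it together: for each such $W$ we compute $g(C)$ for all $C\subseteq W$. Because $U^\dagger\mathcal{A}_C U\subseteq\mathcal{A}_{\mathcal{L}(C)}$ with $|\mathcal{L}(C)|\le|C|\,\ell_1\le p\,\ell_1$, the operator $U^\dagger(\bigotimes_{j\in C}(O_j-I))U$ is supported on at most $p\ell_1$ qubits, can be computed by conjugating through the relevant sub-circuit inside the $2^{p\ell_1}$-dimensional space of that support, and its vacuum matrix element read off --- cost $2^{\mathcal{O}(p\ell_1)}$ per $C$. Forming the coefficients of $f_W:=\sum_{S\subseteq W}\epsilon^{|S|}g(S)$, then $\log f_W$ modulo $\epsilon^{p+1}$, then $\lambda_W$ via the recursion costs a further $\mathcal{O}(2^p\,\mathrm{poly}(p))$; carrying $\mathrm{poly}(p)$ bits of precision is enough, the intermediate Taylor coefficients being well controlled since $f_W$ is itself zero-free in $|\epsilon|\le\beta$ by the zero-free lemma applied to its sub-circuit. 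Summing, $T_p[\log f](1)=\sum_{W}\sum_{m=1}^{p}[\epsilon^m]\lambda_W$ is assembled in time $(n\delta^{-1})^{\mathcal{O}(\log\ell_1)}\cdot 2^{\mathcal{O}(p\ell_1)}=(n\delta^{-1})^{\mathcal{O}(\ell_1)}$, and the algorithm outputs $\tilde\mu=\exp\big(T_p[\log f](1)\big)$, which satisfies Eq.~\eqref{log_approx} by Barvinok's bound. In short, all the difficulty is concentrated in the zero-free lemma of the first paragraph; the remainder is cluster-expansion bookkeeping together with brute-force simulation on $\mathcal{O}(\ell_1\log(n\delta^{-1}))$ qubits.
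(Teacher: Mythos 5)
Your reduction of Theorem~\ref{thm:main} to a zero-free lemma plus Barvinok interpolation plus local computability of the Taylor coefficients of $\log f$ is exactly the paper's strategy, and the cluster-expansion/polymer framing you use is a direct reformulation of the paper's $g_S/h_S$ bookkeeping: your cluster coefficient $\lambda_W$ is the paper's $g_W$ (the monomials of $\ln f(\boldsymbol{\epsilon})$ supported exactly on $W$); your factorization $g(S)=\prod_i g(C_i)$ over $G$-components is the paper's Lemma~3; your M\"obius-type recursion $\lambda_W=h_W-\sum_{W'\subsetneq W,\ G\text{-conn}}\lambda_{W'}$ is the paper's inclusion-exclusion (Proposition~\ref{prop:IE}); and your $n(e\ell_2)^p$ enumeration bound is the paper's Lemma~4. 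The runtime accounting and the final choice $p=\calO(\log(n\delta^{-1}))$ also match, so everything downstream of the zero-free lemma is essentially the paper's proof in a different dialect.

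The one place where you are imprecise --- and which you yourself flag as ``the delicate point'' --- is the sketch of the zero-free lemma. Writing $|f(\epsilon)|^2=\langle 0^{2n}|(U\otimes\overline U)^\dagger(O(\epsilon)\otimes\overline{O(\epsilon)})(U\otimes\overline U)|0^{2n}\rangle$ is a valid identity, but it does not by itself ``exhibit an explicit probability distribution'' on $2n$-bit strings: $O_j(\epsilon)\otimes\overline{O_j(\epsilon)}$ is not a projector and need not even be Hermitian when $O_j$ isn't, so no measurement of the doubled state directly gives probabilities. The paper instead dilates each rescaled $O_j(\epsilon)/\|O_j(\epsilon)\|$ to a two-qubit unitary $B_j(\epsilon)$ (Lemma~\ref{lem:tech}), adjoins one ancilla per qubit, and sets $p_\epsilon(z)=|\langle z|V(\epsilon)|0^{2n}\rangle|^2$ for $V(\epsilon)=\prod_j(U^\dagger\otimes I)B_j(\epsilon)(U\otimes I)$; unitarity of $V(\epsilon)$ is what makes $p_\epsilon$ a bona fide probability distribution, and the off-diagonal bound $\|(I\otimes\langle1|)B_j(\epsilon)(I\otimes|0\rangle)\|\le 2\sqrt{\gamma|\epsilon|}$ is what makes the bad-event probabilities $\calO(\gamma|\epsilon|\ell_1)$ and the LLL applicable with dependency degree $\calO(\ell_4)$. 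You would need this (or an equivalent) dilation step to make your sketch into a proof; as written the $U\otimes\overline U$ picture does not hand you the distribution.
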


For constant-depth circuits $d=\mathcal{O}(1)$ we have $\ell_{1}, \ell_{4}=\mathcal{O}(1)$ and we obtain the claimed efficient algorithm to compute $\mu$. For a general depth-$d$ circuit composed of two-qubit gates $U$ or a geometrically local circuit in $D$-dimensions we may plug in  Eq.~\eqref{eq:ellconst} or Eq.~\eqref{eq:elld} respectively to see how the runtime and the condition Eq.~\eqref{eq:ocond} depend on depth $d$.

To prove Theorem \ref{thm:main}, we use a zero-free region lemma and Barvinok's interpolation lemma.
\begin{lemma}
\label{lem:(zero-free-region)}(Zero-free region) Let $U$ be a quantum circuit, $O=\bigotimes_{j=1}^{n}O_{j}$ be a product operator, and let $\gamma=\max_j \| O_j - I\|$. The polynomial $f(\epsilon)$ is zero-free on the disk $|\epsilon|\le \epsilon_0$, where 
\begin{equation}
\epsilon_0 = \frac{1}{60\gamma\cdot \ell_{1}\cdot \ell_{4}}.
\label{eq:epsdisc}
\end{equation}
\end{lemma}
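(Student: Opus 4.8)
\emph{Proof plan.}

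The plan is to deduce the zero-freeness of $f$ from the strict positivity of $|f(\epsilon)|^{2}$, and to obtain the latter by exhibiting $|f(\epsilon)|^{2}$ as a positive multiple of an atom of a probability distribution on $2n$ bits that has only short-range dependencies controlled by the lightcones of $U$. Write $|\psi\rangle = U|0^{n}\rangle$ and $A_{j} = O_{j}-I$, so that $\|A_{j}\|\le\gamma$ and $O_{j}(\epsilon) = I+\epsilon A_{j}$. The starting point is the ``bra--ket doubling'' identity
\[
|f(\epsilon)|^{2} \;=\; \langle 0^{2n}|\;\mathcal{V}^{\dagger}\Big(\textstyle\bigotimes_{j=1}^{n}(I+\epsilon A_{j})\;\otimes\;\bigotimes_{j=1}^{n}(I+\overline{\epsilon}\,\overline{A_{j}})\Big)\,\mathcal{V}\;|0^{2n}\rangle,\qquad \mathcal{V} = U\otimes\overline{U},
\]
which follows from $f(\epsilon) = \langle\psi|O(\epsilon)|\psi\rangle$, $\overline{f(\epsilon)} = \langle\overline{\psi}|\overline{O(\epsilon)}|\overline{\psi}\rangle$, and $|\psi\rangle\otimes|\overline{\psi}\rangle = \mathcal{V}|0^{2n}\rangle$ (using that $|0^{n}\rangle$ is real). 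Because $\overline{U}$ has the same circuit geometry as $U$, the (iterated) lightcones of $\mathcal{V}$ coincide, up to the evident doubling of qubit labels, with those of $U$, and in particular are bounded by the same $\ell_{1},\ell_{4}$. I would then expand the central tensor product as $\bigotimes_{k=1}^{2n}(I+\eta_{k}C_{k})$ over subsets of the $2n$ qubits ($\eta_{k}\in\{\epsilon,\overline{\epsilon}\}$, $\|C_{k}\|\le\gamma$), push each term through $\mathcal{V}$ and $\mathcal{V}^{\dagger}$, and reorganize the sum into a polymer / Gibbs form in which the ``empty'' term reproduces $\langle 0^{2n}|\cdots|0^{2n}\rangle$. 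The upshot of this step is an explicit probability distribution $p_{\epsilon}$ on $\{0,1\}^{2n}$ --- of the form ``product measure conditioned on a family of local bad events'' --- whose all-zeros atom satisfies $p_{\epsilon}(0^{2n})\propto |f(\epsilon)|^{2}$, and whose decisive feature is that each bad event is supported on a bounded iterate of lightcones of $U$ while each qubit is touched by only a bounded number of bad events.

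With $p_{\epsilon}$ constructed, I would lower-bound $p_{\epsilon}(0^{2n})$ via the Lov\'{a}sz Local Lemma (LLL) applied to this family of bad events. Since $0^{2n}$ violates none of them, $p_{\epsilon}(0^{2n})$ is at least the probability, under the underlying product measure, that no bad event occurs --- which is exactly the quantity the LLL bounds away from zero. For $|\epsilon|\le\epsilon_{0}$ each bad event has probability $O(|\epsilon|\gamma) = O\big((\ell_{1}\ell_{4})^{-1}\big)$ (each carries a coefficient $\eta_{k}$ of modulus at most $|\epsilon|$ multiplying an operator of norm at most $\gamma$), and, tracking the support of a single-qubit perturbation through a constant number of alternating forward/backward lightcone maps of $\mathcal{V}$, one finds that each bad event is mutually independent of all but $D = O(\ell_{1}\ell_{4})$ others --- one factor coming from the size of the region a bad event can influence, the other from the number of bad events a fixed qubit participates in, each an iterated lightcone size. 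Hence the symmetric LLL hypothesis $e\,p\,(D+1)\le 1$ is satisfied once $\gamma\le 1/(60\,\ell_{1}\ell_{4})$, the numerical constant $60$ absorbing $e$ together with the explicit constants hidden in the two $O(\cdot)$ estimates; this is precisely $\gamma\le\epsilon_{0}$ with $\epsilon_{0}$ as in \eqref{eq:epsdisc}. The LLL then gives $p_{\epsilon}(0^{2n})>0$, hence $|f(\epsilon)|^{2}>0$, hence $f(\epsilon)\neq 0$ for all $|\epsilon|\le\epsilon_{0}$.

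The genuinely delicate step is the construction of $p_{\epsilon}$ in the first paragraph: one must produce a single distribution that simultaneously (i) has its all-zeros atom equal, up to a positive normalization, to $|f(\epsilon)|^{2}$ --- so that the reorganization of the complex expansion, and in particular the interplay between the $\epsilon$ and $\overline{\epsilon}$ sectors, genuinely yields non-negative weights even for complex $\epsilon$ --- and (ii) has a dependency graph whose degree is controlled by the iterated lightcone sizes $\ell_{c}$ rather than by $n$. Everything downstream --- the probability and degree bounds for the bad events and the arithmetic that produces the constant $60$ --- is then a routine verification, as is the observation that the lightcones of $\mathcal{V}=U\otimes\overline{U}$ reduce to those of $U$.
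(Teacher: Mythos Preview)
Your high-level strategy---build a probability distribution on $2n$ bits whose all-zeros atom is a positive multiple of $|f(\epsilon)|^{2}$, then invoke the Lov\'asz Local Lemma with dependency degree controlled by iterated lightcones---is exactly the paper's. The gap is precisely the step you yourself flag as ``genuinely delicate'': the construction of $p_{\epsilon}$. Your bra--ket doubling with $\mathcal{V}=U\otimes\overline{U}$ correctly writes $|f(\epsilon)|^{2}$ as the matrix element $\langle 0^{2n}|\mathcal{V}^{\dagger}\,M(\epsilon)\,\mathcal{V}|0^{2n}\rangle$, but the middle operator $M(\epsilon)=\bigotimes_{j}(I+\epsilon A_{j})\otimes\bigotimes_{j}(I+\overline{\epsilon}\,\overline{A_{j}})$ is not unitary, so there is no state whose measurement distribution has this matrix element as an atom. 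The subsequent ``reorganization into a polymer/Gibbs form'' or ``product measure conditioned on local bad events'' is asserted, not constructed: the subset expansion of $M(\epsilon)$ has genuinely complex terms for complex $\epsilon$, and you give no mechanism that pairs them into non-negative weights. Nor is it clear what the ``underlying product measure'' would be, or why the all-zeros atom of the conditioned measure would equal $|f(\epsilon)|^{2}$ rather than merely bound it.

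The paper resolves this with a different doubling. Instead of conjugating the circuit, it \emph{dilates each observable}: since $\|O_{j}(\epsilon)\|^{-1}O_{j}(\epsilon)$ is a contraction, there is a two-qubit \emph{unitary} $B_{j}(\epsilon)$ with $(I\otimes\langle 0|)\,B_{j}(\epsilon)\,(I\otimes|0\rangle)=\|O_{j}(\epsilon)\|^{-1}O_{j}(\epsilon)$ (Lemma~\ref{lem:tech}). Adjoining one ancilla per qubit and setting $V(\epsilon)=\prod_{j}(U^{\dagger}\otimes I)\,B_{j}(\epsilon)\,(U\otimes I)$ yields a genuine $2n$-qubit unitary, so $p_{\epsilon}(z)=|\langle z|V(\epsilon)|0^{2n}\rangle|^{2}$ is \emph{automatically} a probability distribution, with $p_{\epsilon}(0^{2n})=|f(\epsilon)|^{2}/\prod_{j}\|O_{j}(\epsilon)\|^{2}$ by construction. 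The bad events are simply $E_{j}=\{z_{j}=1\}$, so $\bigcap_{j}\overline{E_{j}}$ is \emph{exactly} the event $z=0^{2n}$; there is no auxiliary product measure and no conditioning. Independence of $E_{j}$ and $E_{k}$ holds whenever the operators $Q_{j},Q_{k}$ in $\Pr[E_{j}]=\langle 0^{2n}|Q_{j}|0^{2n}\rangle$ have disjoint support, a lightcone condition giving dependency degree at most $2\ell_{4}$; the bound $\Pr[E_{j}]\le 10\gamma|\epsilon|\,\ell_{1}$ uses the two quantitative facts built into Lemma~\ref{lem:tech}, namely that $B_{j}(\epsilon)$ is $O(\gamma|\epsilon|)$-close to the identity on the $|0\rangle$ ancilla sector and leaks at most $O(\sqrt{\gamma|\epsilon|})$ into $|1\rangle$. (Your split of the factors---probability $O(\gamma|\epsilon|)$ and degree $O(\ell_{1}\ell_{4})$---gives the same product but is not how the argument actually runs.) The unitary-dilation trick is the missing idea in your plan; once you have it, the rest is indeed routine bookkeeping.
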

The proof of Lemma \ref{lem:(zero-free-region)} is deferred to Section \ref{sec:zerofree}. By choosing $\gamma$ small enough as stated in the main theorem, we are guaranteed that $f(\epsilon)\ne 0$ on a disk of radius $|\epsilon|\le\epsilon_0=\beta$. Using Barvinok's lemma (lemma~\ref{lem:(Barvinok-lemma)}) we can interpolate between $f(0)$ and $f(1)$.

Below we shall write $g^{(m)}\equiv g^{(m)}(0)$ for the $m$-th derivative of a function
$g(\epsilon)$ evaluated at $\epsilon=0$.
Let us agree that $g^{(0)}=g(0)$.
\begin{lemma}
\label{lem:(Barvinok-lemma)}(Barvinok's interpolation lemma \cite{barvinok2016combinatorics,barvinok2017computing})
Let $f(\epsilon)$ be a polynomial of degree $n$ and suppose $f(\epsilon)\ne0$
for all $|\epsilon|<\beta$ , where $\beta>1$ is a real number. Let
us choose a branch of
\[
g(\epsilon)=\ln f(\epsilon)\qquad\text{for }|\epsilon|\le1
\]
and consider the its Taylor polynomial
\[
T_{p}(\epsilon)=g^{(0)}+\sum_{k=1}^p \frac{\epsilon^k}{k!} g^{(k)}.
\]
Then
\[
|g(\epsilon)-T_{p}(\epsilon)|\le\frac{n\beta^{-p}}{(p+1)(\beta-1)}\quad\text{ for all  }\quad|\ensuremath{\epsilon}|\le1.
\]
\end{lemma}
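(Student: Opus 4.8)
The plan is to prove this by working directly with the root factorization of $f$ and then estimating a geometric tail. Since $\beta>1$, the point $\epsilon=0$ lies in the open disk $|\epsilon|<\beta$ on which $f$ is zero-free, so $f(0)\neq 0$ and we may write
\be
f(\epsilon) = f(0)\prod_{j=1}^{n}\Bigl(1 - \frac{\epsilon}{\zeta_j}\Bigr),
\ee
where $\zeta_1,\dots,\zeta_n\in\CC$ are the roots of $f$ (with multiplicity). The hypothesis that $f$ has no zero in $|\epsilon|<\beta$ is exactly the statement $|\zeta_j|\ge\beta$ for all $j$. In particular the closed unit disk lies strictly inside the zero-free region, so a branch $g=\ln f$ exists there and extends holomorphically to the simply connected disk $|\epsilon|<\beta$.

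Next I would compute the Taylor coefficients of $g$ at $\epsilon=0$. Taking the logarithmic derivative of the factorization,
\be
g'(\epsilon) = \frac{f'(\epsilon)}{f(\epsilon)} = -\sum_{j=1}^{n}\frac{1/\zeta_j}{1-\epsilon/\zeta_j} = -\sum_{j=1}^{n}\sum_{k\ge 1}\zeta_j^{-k}\epsilon^{k-1},
\ee
the series converging absolutely for $|\epsilon|<\beta\le|\zeta_j|$. Integrating term by term from $0$ and fixing the additive constant to be $g(0)$ gives
\be
g(\epsilon) = g(0) - \sum_{j=1}^{n}\sum_{k=1}^{\infty}\frac{1}{k}\Bigl(\frac{\epsilon}{\zeta_j}\Bigr)^{k}, \qquad |\epsilon|<\beta,
\ee
so that $g^{(k)}(0)/k! = -\tfrac{1}{k}\sum_{j}\zeta_j^{-k}$ for $k\ge 1$. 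Hence $T_p(\epsilon)$ is precisely the truncation of this series at order $p$, and the remainder is the exact tail
\be
g(\epsilon)-T_p(\epsilon) = -\sum_{j=1}^{n}\sum_{k=p+1}^{\infty}\frac{1}{k}\Bigl(\frac{\epsilon}{\zeta_j}\Bigr)^{k}.
\ee

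Finally I would bound this tail. For $|\epsilon|\le 1$ we have $|\epsilon/\zeta_j|\le 1/\beta$ for every $j$, so
\be
|g(\epsilon)-T_p(\epsilon)| \le n\sum_{k=p+1}^{\infty}\frac{\beta^{-k}}{k} \le \frac{n}{p+1}\sum_{k=p+1}^{\infty}\beta^{-k} = \frac{n}{p+1}\cdot\frac{\beta^{-(p+1)}}{1-\beta^{-1}} = \frac{n\beta^{-p}}{(p+1)(\beta-1)},
\ee
which is the claim. The whole argument is routine complex analysis; the only place needing a moment's care is the bookkeeping of logarithm branches — checking that the termwise integral reproduces the chosen branch $g$ rather than a $2\pi i\ZZ$ translate of it — but this is harmless, since $g$ and $T_p$ carry the same additive constant $g(0)$ and the remainder $g-T_p$ is therefore branch-independent. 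So I do not expect a genuine obstacle: the content of the lemma is just the root factorization together with the elementary estimate $\sum_{k>p}\beta^{-k}/k \le \beta^{-p}/((p+1)(\beta-1))$.
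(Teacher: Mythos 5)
Your proof is correct, and it is essentially the same argument given by Barvinok in the cited references \cite{barvinok2016combinatorics,barvinok2017computing}; the paper itself states this lemma without proof, simply citing those works. The root factorization, the resulting power-series identity $g^{(k)}(0)/k!=-\tfrac{1}{k}\sum_j\zeta_j^{-k}$, and the tail estimate $\sum_{k>p}\beta^{-k}/k\le\beta^{-p}/((p+1)(\beta-1))$ are exactly the standard route, and your branch-independence remark correctly dispatches the one point that could have caused a constant-offset ambiguity.
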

Assuming that $\beta>1$ is fixed a priori (below we use $\beta=2$) and setting 
\[
\tilde{\mu}=\exp[T_{p}(1)]
\]
one can achieve the bound Eq.~(\ref{log_approx}) by choosing
\[
p=\mathcal{O}(\ln n\delta^{-1}),
\]
where $\mathcal{O}$ depends only on $\beta$.

To complete the proof of Theorem \ref{thm:main}, in the remainder of this section we show that $\tilde{\mu}$ can be computed using runtime $(n\delta^{-1})^{\mathcal{O}(\ell_{1})}$. As was shown by Barvinok~\cite{barvinok2016combinatorics}, the derivatives of $g(\epsilon)$ can be obtained
from those of $f(\epsilon)$ by solving a simple linear system.
Indeed,  start with the identity 
$f'(\epsilon) = f(\epsilon) g'(\epsilon)$.
Taking the derivatives using  the Leibniz rule 
and setting $\epsilon=0$ one gets
\be
\label{der2}
f^{(m)}=\sum_{j=0}^{m-1}
{m-1 \choose j}
f^{(j)}g^{(m-j)}, \qquad m=1,\ldots,p.
\ee
This is a triangular linear system that determines
$g^{(1)},\ldots,g^{(p)}$ in terms of $f^{(1)},\ldots,f^{(p)}$
\begin{eqnarray}
g^{(1)} & = & f^{(1)}\\
g^{(2)} & = & f^{(2)} - f^{(1)}g^{(1)}\\
g^{(3)} & = &f^{(3)} - f^{(2)}g^{(1)} - 2f^{(1)}g^{(2)}
\end{eqnarray}
and so on. Here we noted that $f^{(0)}=f(0)=1$.

It remains to  calculate the derivatives $f^{(1)},\ldots,f^{(p)}$. To do so, it is convenient
to first define $\boldsymbol{\epsilon}=(\epsilon_{1},\dots,\epsilon_{n})\in\mathbb{C}^{n}$
and consider the multivariate version of Eq.~\eqref{eq:f_epsilon} and
then evaluate the results at $\boldsymbol{\epsilon}=(\epsilon,\epsilon,\dots,\epsilon)$.
To this end let
\begin{equation}
f(\boldsymbol{\epsilon})=\langle0^{n}|U^{\dagger}O(\boldsymbol{\epsilon})U|0^{n}\rangle,\label{eq:MultiVar_f}
\end{equation}
where $O(\boldsymbol{\epsilon})=\bigotimes_{j=1}^{n}O_{j}(\epsilon_{j})$
and
\[
O_{j}(\epsilon_{j})=I+\epsilon_{j}(O_{j}-I).
\]
 A monomial is defined by $M(\boldsymbol{\epsilon})=\alpha\prod_{j=1}^{n}\epsilon_{j}^{m_{j}}$,
where $\alpha\ne0$ is a complex coefficient and all $m_{j}\ge0$
are integers. We say $M(\boldsymbol{\epsilon})$ is supported on the
set $S\subseteq[n]$ if and only if $m_{j}>0$ for all $j\in S$ and
$m_{j}=0$ for all $j\notin S$. The degree of $M(\boldsymbol{\epsilon})$
is defined by $\sum_{j=1}^{n}m_{j}$. Let $T_p(\boldsymbol{\epsilon})$ be the Taylor series for $g(\boldsymbol{\epsilon})=\ln{f(\boldsymbol{\epsilon})}$ at $\boldsymbol{\epsilon}=0^n$ truncated at the order $p=\log_2{(n\delta^{-1})}$. By definition, the series $T_p(\boldsymbol{\epsilon})$ is a sum of monomials with degree at most $p$.
\begin{dfn}
Consider a subset $S\subseteq [n]$.
Define $g_S(\epsilon)$ as the sum of all monomials in $T_p(\boldsymbol{\epsilon})$ that are supported on
$S$, evaluated at the point $\boldsymbol{\epsilon}=(\epsilon,\ldots,\epsilon)$.
Define $h_S(\epsilon)$ as the sum of all monomials in $T_p(\boldsymbol{\epsilon})$
that are supported on some subset of $S$,
 evaluated at the point $\boldsymbol{\epsilon}=(\epsilon,\ldots,\epsilon)$.
\end{dfn}
Below we use the convention $g_\emptyset(\epsilon)=0$ and $h_\emptyset(\epsilon)=0$.
By definition, $g_S(\epsilon)$ and $h_S(\epsilon)$ are polynomials of degree at most $p$.
Let us discuss some basic properties of $g_S(\epsilon)$ and $h_S(\epsilon)$.
First, since the support of a monomial is uniquely defined and
$g_S(\epsilon)$ gets contributions from monomials of degree at least $|S|$, we have
\be
\label{p1}
T_p(\epsilon)=\sum_{S\subseteq [n]\, : \, 1\le |S|\le p} \; g_S(\epsilon).
\ee
Thus the task of computing $T_p(\epsilon)$ reduces to computing
$g_S(\epsilon)$ for all subsets $S$ of size at most $p$. First, 
we claim $g_S(\epsilon)$ can be computed in terms of $h_S(\epsilon)$
as follows.
\begin{prop}[\bf Inclusion-Exclusion]
\label{prop:IE}
\be
\label{IE}
g_S(\epsilon)=\sum_{T\subseteq S} (-1)^{|S\setminus T|} \, h_T(\epsilon)
\ee
for any subset $S\subseteq [n]$.
\end{prop}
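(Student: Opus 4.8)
The plan is to recognize Eq.~\eqref{IE} as Möbius inversion (inclusion--exclusion) on the Boolean lattice of subsets of $S$. The starting point is the observation that every monomial $M(\boldsymbol{\epsilon})=\alpha\prod_{j}\epsilon_j^{m_j}$ has a \emph{unique} support, namely the set $\{j:m_j>0\}$. Hence the monomials occurring in the fixed truncated Taylor polynomial $T_p(\boldsymbol{\epsilon})$ that are supported on \emph{some} subset of $S$ split, according to their exact support $T\subseteq S$, into the disjoint groups that define $g_T(\epsilon)$. Since evaluation at $\boldsymbol{\epsilon}=(\epsilon,\dots,\epsilon)$ is linear, this gives the ``zeta transform'' identity
\[
h_S(\epsilon)=\sum_{T\subseteq S} g_T(\epsilon)
\]
for every $S\subseteq[n]$, consistent with the conventions $g_\emptyset(\epsilon)=h_\emptyset(\epsilon)=0$ (the degree-zero monomial does not appear in $T_p$ because $g(0^n)=\ln f(0^n)=\ln 1=0$).

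Next I would invert this relation. Rather than quoting Möbius inversion as a black box, I would substitute the zeta identity into the right-hand side of Eq.~\eqref{IE}, interchange the two sums to obtain
\[
\sum_{T\subseteq S}(-1)^{|S\setminus T|}h_T(\epsilon)=\sum_{R\subseteq S} g_R(\epsilon)\sum_{R\subseteq T\subseteq S}(-1)^{|S\setminus T|},
\]
and note that the inner alternating sum over the interval $[R,S]$ equals $(1-1)^{|S\setminus R|}=\mathbf{1}[R=S]$ by the binomial theorem. This collapses the outer sum to $g_S(\epsilon)$, which is exactly Eq.~\eqref{IE}. (Equivalently, one observes that the zeta function $T\mapsto\mathbf{1}[T\subseteq S]$ on the Boolean lattice has inverse $T\mapsto(-1)^{|S\setminus T|}$.)

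There is essentially no hard step here; the only points that require a word of care are (i) the uniqueness of a monomial's support, which is what makes the partition in the zeta identity well defined, and (ii) compatibility with the degree truncation --- but since $g_T$ and $h_T$ are defined by \emph{selecting} monomials already present in the fixed polynomial $T_p(\boldsymbol{\epsilon})$, the truncation is automatically respected on both sides and needs no separate argument. I would therefore present the proof in the ``substitute and swap sums'' form above, which is short and self-contained.
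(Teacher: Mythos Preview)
Your proof is correct. Both your argument and the paper's rest on the same alternating-sum identity over the Boolean interval $[R,S]$, but the overall structure differs. The paper proceeds by induction on $|S|$: it writes $g_S=h_S-\sum_{T\subsetneq S}g_T$, applies the induction hypothesis to each $g_T$, and then collapses the resulting double sum using the identity. You instead first state the ``zeta'' relation $h_S=\sum_{T\subseteq S}g_T$ directly from the definitions (which the paper only uses implicitly, in rearranged form), substitute it into the right-hand side of Eq.~\eqref{IE}, swap sums, and collapse in one shot. Your route is the textbook M\"obius-inversion argument and is shorter and more transparent; the paper's inductive version is self-contained but slightly roundabout since it ends up invoking the same combinatorial identity anyway. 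Your remarks about the uniqueness of a monomial's support and the vanishing constant term (so that $g_\emptyset=h_\emptyset=0$) are exactly the points needed to make the zeta identity well defined, and your handling of the degree-$p$ truncation is correct.
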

\begin{proof}
Indeed, let us prove Eq.~\eqref{IE} by induction in $|S|$.
The base case is $|S|=1$. Then $g_S(\epsilon)=h_S(\epsilon)$ by definition.
Suppose we have already proved Eq.~\eqref{IE} for all subsets $S$ of size
$|S|\le m$. Let $S$ be a subset of size $m+1$. Then by definition, 
\be
g_S(\epsilon) = h_S(\epsilon) - \sum_{T\subset S} g_T(\epsilon),
\ee
where the sum runs over all proper subsets of $S$. Since $|T|\le m$,
we use the induction hypothesis to express $g_T(\epsilon)$ in terms of $h_R(\epsilon)$ with $R\subseteq T$. It gives
\be
g_S(\epsilon) = h_S(\epsilon) - \sum_{T\subset S} \sum_{R\subseteq T} (-1)^{|T\setminus R|} h_R(\epsilon).
\ee
Changing the summation order one gets
\be
g_S(\epsilon)  = h_S(\epsilon) - \sum_{R\subseteq S} h_R(\epsilon) \sum_{R\subseteq T\subset S} (-1)^{|T\setminus R|}.
\ee
Let us add and subtract the term $(-1)^{|T\setminus R|}$ with $T=S$. We get
\be
g_S(\epsilon)  = h_S(\epsilon) -  \sum_{R\subseteq S} h_R(\epsilon)
\left[ -(-1)^{|S\setminus R|} +  \sum_{R\subseteq T\subseteq S} (-1)^{|T\setminus R|} \right].
\ee
Using the well-known identity
\be
\sum_{T\, : \, R\subseteq T\subseteq S} \; (-1)^{|T\setminus R|} = \left\{
\ba{rcl}
1 &\mbox{if} & R=S,\\
0 && \mbox{otherwise} \\
\ea 
\right.
\ee
one arrives at
\be
g_S(\epsilon) = h_S(\epsilon) -h_S(\epsilon) + \sum_{R\subseteq S} h_R(\epsilon) (-1)^{|S\setminus R|}
=  \sum_{R\subseteq S} h_R(\epsilon) (-1)^{|S\setminus R|}.
\ee
This proves the induction hypothesis.
\end{proof}
Next  we claim that 
$h_S(\epsilon)$ can be computed for any given subset $S$ in time $2^{\mathcal{O}(\ell_{1}|S|)}$.
Indeed, define a polynomial
\be
\mu_S(\epsilon)\equiv  \la 0^n|U^\dag \prod_{j\in S} O_j(\epsilon) U |0^n\ra.
\ee
Let $T_{p,S}(\epsilon)$ be the Taylor expansion of $\ln{\mu_S(\epsilon)}$
at $\epsilon=0$
truncated at the $p$-th order. Note that 
\be
h_S(\epsilon)=T_{p,S}(\epsilon).
\ee 
Indeed, both polynomials are obtained from $g(\boldsymbol{\epsilon})$ by retaining
monomials of degree at most $p$ supported on some subset of $S$
and then setting $\boldsymbol{\epsilon}=(\epsilon,\ldots,\epsilon)$.
We claim that
the polynomial $\mu_S(\epsilon)$ can be computed in time roughly $2^{\mathcal{O}(\ell_{1}|S|)}$.
Indeed, one can first restrict the circuit $U$ by removing any gate which acts outside of $\calL(S)$. The latter contains at most $\ell_{1}|S|$ qubits.
The restricted circuit can be simulated by the brute-force method in time $2^{\mathcal{O}(\ell_{1}|S|)}$. Once the polynomial $\mu_S(\epsilon)$ is computed, 
one can solve the triangular linear system expressing $T_{p,S}(\epsilon)$
in terms of the first $p$ coefficients of $\mu_S(\epsilon)$
using Barvinok's method~\cite{barvinok2016combinatorics}.

Finally, we claim that $g_S(\epsilon)=0$ unless $S$ has a certain connectivity property.
\begin{dfn}
A subset $S\subseteq [n]$ is said to be $\calL$-connected  if any partition $S=S_1S_2$ 
into disjoint non-empty subsets $S_1,S_2$ 
satisfies $\calL(S_1)\cap \calL(S_2)\ne \emptyset$.
\end{dfn}
\setcounter{lemma}{2}
\begin{lemma}
$g_S(\epsilon)=0$ unless $S$ is $\calL$-connected.
\end{lemma}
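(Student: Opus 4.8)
The plan is to exploit the multiplicativity of the logarithm, i.e.\ a linked--cluster type argument. Recall that $f(\boldsymbol{\epsilon})$ is a polynomial with $f(0^n)=1$, so $g(\boldsymbol{\epsilon})=\ln f(\boldsymbol{\epsilon})$ is a well-defined power series with vanishing constant term near $\boldsymbol{\epsilon}=0^n$, that $T_p(\boldsymbol{\epsilon})$ is its degree-$p$ truncation, and that $g_S(\epsilon)$ is the sum of the monomials of $T_p$ supported exactly on $S$, evaluated at $\boldsymbol{\epsilon}=(\epsilon,\dots,\epsilon)$. It therefore suffices to show that, when $S$ is not $\calL$-connected, the full power series $g(\boldsymbol{\epsilon})$ contains no monomial supported on $S$; then $g_S(\epsilon)=0$ is immediate since $T_p$ is a truncation of $g$.

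First I would reduce to the variables indexed by $S$. For $A\subseteq[n]$ write $f_A(\boldsymbol{\epsilon})=\langle 0^n|U^{\dagger}\bigl(\bigotimes_{j\in A}O_j(\epsilon_j)\bigr)U|0^n\rangle$, the factors on $[n]\setminus A$ being the identity; this is a polynomial in $\{\epsilon_j:j\in A\}$ with $f_A(0)=1$. Setting $\epsilon_j=0$ for all $j\notin S$ in $f(\boldsymbol{\epsilon})$ yields exactly $f_S(\boldsymbol{\epsilon})$ (since $O_j(0)=I$), and since substituting zeros commutes with the formal series $\ln(1+x)=\sum_{k\ge 1}(-1)^{k+1}x^k/k$ applied to $f-1$, it also sends $g(\boldsymbol{\epsilon})$ to $\ln f_S(\boldsymbol{\epsilon})$. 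The monomials of $g$ surviving this substitution are precisely those supported on subsets of $S$, so the monomials of $g$ supported exactly on $S$ coincide (coefficients included) with those of $\ln f_S$ supported exactly on $S$. Hence it is enough to show $\ln f_S$ has no monomial supported on all of $S$.

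Now suppose $S$ is not $\calL$-connected: there is a partition $S=A\sqcup B$ into nonempty sets with $\mathcal{L}(A)\cap\mathcal{L}(B)=\emptyset$. The operator $\bigotimes_{j\in S}O_j(\epsilon_j)$ factors as the product of $\bigotimes_{j\in A}O_j(\epsilon_j)\in\mathcal{A}_A$ and $\bigotimes_{k\in B}O_k(\epsilon_k)\in\mathcal{A}_B$; conjugating by $U$ and using $U^{\dagger}\mathcal{A}_A U\subseteq\mathcal{A}_{\mathcal{L}(A)}$ (and likewise for $B$, where $\mathcal{L}(A)=\cup_{j\in A}\mathcal{L}(j)$) gives $U^{\dagger}\bigl(\bigotimes_{j\in S}O_j(\epsilon_j)\bigr)U=P_A P_B$ with $P_A\in\mathcal{A}_{\mathcal{L}(A)}$ and $P_B\in\mathcal{A}_{\mathcal{L}(B)}$ acting on disjoint sets of qubits. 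Since $|0^n\rangle$ is a product state, $f_S(\boldsymbol{\epsilon})=\langle 0^n|P_A|0^n\rangle\,\langle 0^n|P_B|0^n\rangle=:f_A(\boldsymbol{\epsilon})\,f_B(\boldsymbol{\epsilon})$, where $f_A$ depends only on $\{\epsilon_j:j\in A\}$, $f_B$ only on $\{\epsilon_k:k\in B\}$, and $f_A(0)=f_B(0)=1$.

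Finally, near $\boldsymbol{\epsilon}=0^n$ we have $\ln f_S=\ln f_A+\ln f_B$. Every monomial of $\ln f_A$ involves only variables in $A$ and every monomial of $\ln f_B$ only variables in $B$; since $A$ and $B$ are both nonempty, neither term---hence neither $\ln f_S$ nor $g$---contains a monomial supported on all of $S$, and so $g_S(\epsilon)=0$. This is essentially the standard linked-cluster principle; the only points requiring a little care are checking that lightcone disjointness genuinely yields disjoint operator supports (immediate from the definition of $\mathcal{L}$) and that restricting to the $S$-variables commutes with the logarithm (routine since $f(0^n)=1$), rather than any hard estimate.
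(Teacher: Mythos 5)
Your proof is correct and takes essentially the same route as the paper: the paper also observes that when $S$ is not $\calL$-connected the restricted polynomial $\mu_S(\boldsymbol{\epsilon})=\langle 0^n|U^\dagger\prod_{j\in S}O_j(\epsilon_j)U|0^n\rangle$ factorizes over the two lightcone-disjoint blocks, so $\ln\mu_S$ splits as a sum and its Taylor series contains no monomial supported on all of $S$. The only difference is one of exposition: you spell out the reduction step (that the monomials of $g$ supported on subsets of $S$ coincide with those of $\ln f_S$, because substituting $\epsilon_j=0$ for $j\notin S$ commutes with $\ln$), whereas the paper states this as a direct consequence of definitions.
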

\begin{lemma}
The number of $\calL$-connected subsets $S\subseteq [n]$ of size $p$
 is at most $n(3\ell_{2})^{p-1}$.
\end{lemma}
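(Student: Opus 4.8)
The plan is to reduce the count to the well-studied problem of counting connected subsets of a bounded-degree graph. Introduce the auxiliary graph $G$ on the vertex set $[n]$ in which two distinct vertices $j,k$ are adjacent exactly when $\calL(j)\cap\calL(k)\neq\emptyset$; that is, $G$ is the intersection graph of the forward lightcones. The first claim is that every $\calL$-connected set is connected in $G$. Indeed, if $S$ is disconnected in $G$ then it admits a partition $S=S_1\sqcup S_2$ into non-empty parts with no $G$-edge across the cut; since $S_1$ and $S_2$ are disjoint this means $\calL(j)\cap\calL(k)=\emptyset$ for all $j\in S_1$, $k\in S_2$, and therefore $\calL(S_1)\cap\calL(S_2)=\bigcup_{j\in S_1,\,k\in S_2}\calL(j)\cap\calL(k)=\emptyset$, so $S$ is not $\calL$-connected. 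Hence it suffices to bound the number of $p$-element connected subsets of $G$.

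Next I bound the maximum degree of $G$. If $k$ is adjacent to $j$, pick $m\in\calL(j)\cap\calL(k)$; from $m\in\calL(k)$ and Proposition~\ref{prop:sym} we get $k\in\calL_{\leftarrow}(m)$, and since $m\in\calL(j)$ this gives $k\in\calL_{\leftarrow}(\calL(j))=\calL(j,2)$. As $j\in\calL(j,2)$ and $|\calL(j,2)|\le\ell_2$, each vertex of $G$ has at most $\ell_2-1$ neighbours. Now fix a vertex $v$ and bound the number of $p$-element connected subsets $S$ of $G$ with $v\in S$. Take the BFS spanning tree of the induced subgraph $G[S]$ rooted at $v$, breaking ties by vertex index; this tree is determined by $S$. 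Processing the $p$ vertices in BFS order and recording, for each vertex, the subset of its as-yet-unvisited neighbours that are its children in the tree yields an injective encoding of $S$: each vertex has at most $\ell_2-1$ candidate neighbours, and the recorded subset sizes sum to $p-1$, so by Vandermonde's identity the number of such encodings, hence of such subsets, is at most $\binom{(\ell_2-1)p}{p-1}\le\binom{\ell_2 p}{p-1}$. The elementary inequality $\binom{\ell_2 p}{p-1}\le(\ell_2 p)^{p-1}/(p-1)!\le(3\ell_2)^{p-1}$, using $(p-1)!\ge(p/3)^{p-1}$ for $p\ge1$, together with a sum over the $n$ choices of $v$ (which even overcounts each subset $p$ times) gives the bound $n(3\ell_2)^{p-1}$.

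The one step that requires care is this last combinatorial estimate. The cruder encoding --- running a DFS on the spanning tree and recording, at each of the $p-1$ descents, which neighbour is the next child together with the Dyck-path shape of the $2(p-1)$-step traversal --- only yields a bound of order $(4\ell_2)^{p-1}$, which cannot be absorbed into $n(3\ell_2)^{p-1}$ once $\ell_2\ge5$. Replacing the ordered-list-of-children encoding by an unordered-subset-of-neighbours encoding packages the tree shape together with the per-vertex choices into a single Vandermonde sum and eliminates the spurious Catalan factor, bringing the base of the exponential down to $3\ell_2$. Everything else --- the reduction to the intersection graph $G$, the degree bound from Proposition~\ref{prop:sym}, and the inequality $\binom{\ell_2 p}{p-1}\le(3\ell_2)^{p-1}$ --- is routine bookkeeping.
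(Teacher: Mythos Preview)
Your proof is correct and follows the same overall strategy as the paper: both introduce the lightcone-intersection graph $G$, bound its maximum degree by $\ell_2$ via Proposition~\ref{prop:sym}, and then count connected $p$-vertex subsets of a bounded-degree graph containing a fixed vertex. The only difference is that the paper quotes this last count as $(e\ell_2)^{p-1}$ from an external reference, whereas you supply a self-contained BFS-tree/Vandermonde argument giving $\binom{(\ell_2-1)p}{p-1}\le(3\ell_2)^{p-1}$.
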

Combining Lemmas~3  and Eq.~\eqref{p1}, we infer that computing $T_p(\epsilon)$
amounts to computing  $g_S(\epsilon)$ for each
$\calL$-connected subset $S$ of size at most $p$.
By Lemma~4, 
the number of such subsets is at most
\be
n\sum_{q=1}^p (3\ell_{2})^{q-1} = n \frac{(3\ell_{2})^p-1}{3\ell_{2}-1} \leq \frac{n(3\ell_{2})^{p-1}}{1-1/(3\ell_{2})}\le \frac{3n}{2}(3\ell_{2})^{p-1}.
\ee
From Proposition~\ref{prop:IE} one infers that computing 
$g_S(\epsilon)$ for a given subset $S$ of size $|S|\le p$
amounts to computing  $h_T(\epsilon)$ for all $T\subseteq S$.
The number of subsets $T\subseteq S$ is 
$2^{|S|}\le 2^p$. As shown above, one can compute 
$h_T(\epsilon)$ for any given subset $T$ in time
roughly $2^{\mathcal{O}(\ell_{1}|T|)}\le 2^{\mathcal{O}(\ell_{1}p)}$. 
Thus the overall runtime required to compute $T_p(\epsilon)$ scales as
\[
\frac{3n}{2}(3\ell_{2})^{p-1} \cdot 2^p \cdot 2^{\mathcal{O}(\ell_{1}p)}\leq n(3\ell_{1}^2)^{p-1} 2^{\mathcal{O}(\ell_{1}p)}=(n\delta^{-1})^{\mathcal{O}(\ell_{1})},
\]
where in the first inequality we used Eq.~\eqref{eq:naive}.

\begin{proof}[\bf Proof of Lemma~3]
Suppose $S$ is not $\calL$-connected. Choose a partition 
$S=S_1 S_2$ such that $S_1,S_2$ are disjoint non-empty
subsets and $\calL(S_1)\cap \calL(S_2)=\emptyset$. Define
a multi-variate polynomial
\be
\mu_S(\boldsymbol{\epsilon}) = \la 0^n|U^\dag \prod_{j\in S} O_j(\epsilon_j) U |0^n\ra.
\ee
Since the lightcones of $S_1$ and $S_2$ do not overlap,
$\mu_S(\boldsymbol{\epsilon})$ is a product of some polynomial
depending on $\{\epsilon_j \, : \, j\in S_1\}$
and some polynomial depending on $\{\epsilon_j \, : \, j\in S_2\}$.
By definition, $g_S(\epsilon)$ is obtained from
the Taylor series of $\ln{\mu_S(\boldsymbol{\epsilon})}$ at $\boldsymbol{\epsilon}=0^n$
by retaining all monomials of degree $1,2,\ldots,p$ supported on $S$
and setting $\boldsymbol{\epsilon}=(\epsilon,\ldots,\epsilon)$.
However, since $\ln{\mu_S(\boldsymbol{\epsilon})}$ is a sum of some
function depending on $\{\epsilon_j \, : \, j\in S_1\}$ and
some function depending on $\{\epsilon_j \, : \, j\in S_2\}$,
the Taylor series of $\ln{\mu_S(\boldsymbol{\epsilon})}$ contains no monomials
supported on $S$. Thus $g_S(\epsilon)=0$, as claimed.
\end{proof}

\begin{proof}[\bf Proof of Lemma~4]
Define a graph $G$ with the set of vertices $[n]$ such that vertices
$i,j$ are connected by an edge iff 
\[
\mathcal{L}(i)\cap \mathcal{L}(j)\neq \emptyset.
\]
Using Proposition \ref{prop:sym} we see that this condition implies $j\in \mathcal{L}_{\leftarrow}(\mathcal{L}(i))=\mathcal{L}(i,2)$. Therefore the graph $G$ has maximum vertex degree at most $\ell_{2}$. By definition, a subset $S$ is $\calL$-connected iff
$S$ is a connected subset of vertices in $G$.
The number of connected subsets $S\subseteq [n]$ of size $p$ that contain a given
vertex $j$ is at most $(e\ell_{2})^{p-1}$, where $e=\exp{(1)}$, see Lemma~5 in 
\cite{aliferis2007accuracy}. Thus the total number of connected subsets of size $p$
is at most $n(3\ell_{2})^{p-1}$.
\end{proof}
\section{Zero-free region}
\label{sec:zerofree}

In this section we study the zero-free radius of the polynomial $f(\epsilon)$
defined in Eq.~\eqref{eq:f_epsilon}.
 In Section \ref{sec:lemmaproof} we prove Lemma \ref{lem:(zero-free-region)}, which establishes an $n$-independent lower bound on the zero-free radius of the polynomial $f(\epsilon)$ for constant-depth circuits. In particular, for a depth-$d$ circuit composed of two-qubit gates, the radius Eq.~\eqref{eq:epsdisc} is at least
\begin{equation}
\epsilon_0=\Omega(\gamma^{-1} 2^{-5d}),
\label{eq:depthdradius}
\end{equation}
where we used Eq.~\eqref{eq:ellconst}.  A simple example shows that this bound is tight up to constant factors in the exponential. In particular, it is easy to see that for each $d\geq 1$, the $2^d$-qubit GHZ state
\[
|\mathrm{GHZ}_{2^d}\rangle=\frac{1}{\sqrt{2}}\left(|0\rangle^{\otimes {2^d}}+|1\rangle^{\otimes {2^d}}\right)
\] 
can be prepared by a depth-$d$ circuit composed of two-qubit gates. We may choose each operator 
\begin{equation}
O_j=I+Z_j
\label{eq:oj2}
\end{equation}
 so that $\gamma =\max_j \| O_j -I\|=1$, $O_j(\epsilon)=I+\epsilon Z_j$, and
\[
f(\epsilon)=\frac{1}{2}\left((1+\epsilon)^{2^d}+(1-\epsilon)^{2^d}\right),
\]
which has zero free radius $\mathcal{O}(2^{-d})$, as can be seen by verifying that $f$ has a root at $\epsilon=(-1+e^{i\pi/2^d})(1+e^{i\pi/2^d})^{-1}$.  

While this example shows that there exist depth-$d$ circuits with zero-free radius exponentially small in $d$, we expect that such circuits are non-generic. To support this claim, in Section \ref{sec:random} we consider the zero-free radius for the polynomial $f(\epsilon)$ with operators given by  Eq.~\eqref{eq:oj2} and unitary $U$ drawn at random from any ensemble which forms a unitary $2$-design. In this case we show that with high probability the zero-free radius of $f$ is very close to $1$. 

\subsection{Proof of Lemma \ref{lem:(zero-free-region)}}
\label{sec:lemmaproof}

The proof is based on the Lov\'{a}sz Local Lemma \cite{erdHos1973problems}.

\begin{theorem}[Lov\'{a}sz Local Lemma]
Suppose $E_1,E_2,\ldots, E_m$ are events in a probability space, that each event $E_j$ is independent of all but at most $K$ of the others, and that $\mathrm{Pr}[E_j]\leq p$ for all $j$. If $p\cdot exp(1)\cdot K<1$ then 
\begin{equation}
\mathrm{Pr}\left[\cap_{j} \overline{E}_j\right]>0.
\label{eq:lll}
\end{equation}
\end{theorem}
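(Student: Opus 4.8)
The plan is to prove the symmetric Lov\'asz Local Lemma by the standard inductive argument on conditional probabilities. Introduce the \emph{dependency graph} $G$ on vertex set $\{1,\dots,m\}$, with an edge between $i$ and $j$ whenever $E_i$ and $E_j$ fail to be independent; by hypothesis every vertex has degree at most $K$ (if one adopts the convention that $E_j$'s dependency set includes $E_j$ itself, the degree is at most $K-1$, which is the reading that matches $e\,p\,K<1$ exactly; otherwise one gets the textbook $e\,p\,(K+1)\le 1$, and below I carry the parameter so that either works). Write $N(j)$ for the neighbourhood of $j$ in $G$. The core step is to fix a constant $x\in(0,1)$ with $p\le x(1-x)^{K}$ — possible because $e\,p\,K<1$ places $p$ below $\max_x x(1-x)^K$ (take $x=\tfrac1{K+1}$, using $(1-\tfrac1{K+1})^K\ge e^{-1}$, or $x=\tfrac1K$ under the self-inclusive convention) — and then to prove, by induction on $|S|$, the claim that
\[
\Pr\!\left[\,E_j \mid \textstyle\bigcap_{i\in S}\overline{E}_i\,\right]\le x
\qquad\text{for every }j\text{ and every }S\subseteq\{1,\dots,m\}\setminus\{j\}.
\]
Along the way the induction also certifies that $\Pr[\bigcap_{i\in S}\overline{E}_i]>0$ for all such $S$ (the conditioning events are never null), so all the conditional probabilities written below are well defined.

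\textbf{Base case and inductive step.} For $S=\emptyset$ we simply have $\Pr[E_j]\le p\le x$. For the inductive step, partition $S=S_1\cup S_2$ with $S_1=S\cap N(j)$ and $S_2=S\setminus N(j)$, and write
\[
\Pr\!\left[\,E_j \mid \textstyle\bigcap_{i\in S}\overline{E}_i\,\right]
=\frac{\Pr\!\left[\,E_j\cap\bigcap_{i\in S_1}\overline{E}_i \mid \bigcap_{i\in S_2}\overline{E}_i\,\right]}
{\Pr\!\left[\,\bigcap_{i\in S_1}\overline{E}_i \mid \bigcap_{i\in S_2}\overline{E}_i\,\right]}.
\]
For the numerator, drop the intersection over $S_1$ and use that $E_j$ is independent of every $E_i$ with $i\in S_2$ (these are non-neighbours), giving an upper bound $\Pr[E_j\mid\bigcap_{i\in S_2}\overline{E}_i]=\Pr[E_j]\le p$. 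For the denominator, list $S_1=\{i_1,\dots,i_r\}$ with $r\le K$ and apply the chain rule,
\[
\Pr\!\left[\,\textstyle\bigcap_{k=1}^{r}\overline{E}_{i_k} \mid \bigcap_{i\in S_2}\overline{E}_i\,\right]
=\prod_{k=1}^{r}\Bigl(1-\Pr\!\left[\,E_{i_k} \mid \overline{E}_{i_1}\cap\cdots\cap\overline{E}_{i_{k-1}}\cap\textstyle\bigcap_{i\in S_2}\overline{E}_i\,\right]\Bigr),
\]
where each conditioning set on the right has size strictly smaller than $|S|$, so the induction hypothesis bounds each factor's conditional probability by $x$ and the product is at least $(1-x)^r\ge(1-x)^K$. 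Combining, $\Pr[E_j\mid\bigcap_{i\in S}\overline{E}_i]\le p/(1-x)^K\le x$ by the choice of $x$, which closes the induction.

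\textbf{Conclusion.} Finally, telescoping over $j=1,\dots,m$,
\[
\Pr\!\left[\,\textstyle\bigcap_{j=1}^{m}\overline{E}_j\,\right]
=\prod_{j=1}^{m}\Pr\!\left[\,\overline{E}_j \mid \overline{E}_1\cap\cdots\cap\overline{E}_{j-1}\,\right]
=\prod_{j=1}^{m}\Bigl(1-\Pr\!\left[\,E_j \mid \overline{E}_1\cap\cdots\cap\overline{E}_{j-1}\,\right]\Bigr)
\ge(1-x)^m>0,
\]
which is exactly Eq.~\eqref{eq:lll}. \textbf{Main obstacle.} The delicate point is the inductive step: one must split $S$ into the neighbours and non-neighbours of $j$, recognize that the numerator collapses to the unconditioned $\Pr[E_j]$ by independence, and unfold the denominator by the chain rule so that every conditional probability appearing is governed by a strictly smaller instance of the induction — together with the bookkeeping that the conditioning events stay non-null. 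The only arithmetic input is that $e\,p\,K<1$ is precisely what guarantees the existence of a feasible $x$ with $p\le x(1-x)^K$ (hence the mild convention-dependence of the exact constant noted above).
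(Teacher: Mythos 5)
The paper does not give its own proof of the Lov\'asz Local Lemma; it cites Erd\H{o}s--Lov\'asz \cite{erdHos1973problems} and uses the statement as a black box. Your proof is the standard inductive argument (partition the conditioning set into neighbours and non-neighbours, collapse the numerator by independence, unroll the denominator by the chain rule, close the induction with the numerical choice of $x$, then telescope), and it is correct.

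Two technical remarks. First, you rightly flag the off-by-one convention issue: with $K$ counting the number of \emph{other} events $E_j$ may depend on, the denominator in the inductive step involves up to $K$ factors, so the closing inequality one needs is $p\le x(1-x)^K$, whose feasibility is guaranteed by $ep(K+1)\le 1$ rather than $epK<1$. Your observation that reading $K$ as the self-inclusive count (dependency degree at most $K-1$) makes $epK<1$ exactly sufficient, via $x=1/K$ and $(1-1/K)^{K-1}\ge e^{-1}$, is the correct way to reconcile the statement as written with the proof; the discrepancy is immaterial for the paper's application in Lemma~\ref{lem:(zero-free-region)}, where only the order of magnitude of the constant enters. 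Second, the step where you bound the numerator by $\Pr[E_j\mid\bigcap_{i\in S_2}\overline{E}_i]=\Pr[E_j]$ uses that $E_j$ is \emph{mutually} independent of the whole family $\{E_i:i\in S_2\}$, not merely pairwise independent of each $E_i$. The theorem statement in the paper (and your proof, inheriting it) phrases this loosely as ``independent of all but at most $K$ of the others''; it should be read in the mutual sense, as is standard, and indeed the paper's application satisfies the stronger mutual-independence condition since the events are defined on disjoint sets of variables when $Q_j$ and $Q_k$ have disjoint support.
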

Here $\overline{E}_j$ is the negation of event $E_j$, so Eq.~\eqref{eq:lll} is the probability that none of the events $E_1,E_2,\ldots, E_m$ occur.
We will also use the following simple fact:
\begin{lemma}
For each $j=1,\ldots, n$ and $\epsilon\in \mathbb{C}$, there is a $2$-qubit unitary $B_j(\epsilon)$ such that 
\begin{equation}
\frac{1}{\|O_j(\epsilon)\|} O_j(\epsilon)=\left(I\otimes \langle 0|\right) B_j(\epsilon) \left(I\otimes | 0\rangle\right) 
\label{eq:bj1}
\end{equation}
and
\begin{equation}
\|\left(I\otimes \langle 1|\right)B_j(\epsilon) \left(I\otimes |0\rangle\right) \|\leq 2\sqrt{\gamma |\epsilon|}.
\label{eq:bj2}
\end{equation}
\label{lem:tech}
\end{lemma}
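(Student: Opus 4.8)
\noindent\textbf{Proof proposal for Lemma~\ref{lem:tech}.} The plan is to read the lemma as the statement that the single-qubit contraction $A := O_j(\epsilon)/\|O_j(\epsilon)\|$ (which has spectral norm exactly $1$) admits a two-qubit block encoding whose off-diagonal ``garbage'' block is controlled by the smallest singular value of $A$. I would therefore proceed in three steps: build a dilation from the singular value decomposition, identify the two relevant ancilla blocks, and then bound the off-diagonal one using crude singular-value estimates for $O_j(\epsilon)=I+\epsilon(O_j-I)$.

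First I would write the SVD $A=W\Sigma V^{\dagger}$ with $W,V\in U(2)$ and $\Sigma=\mathrm{diag}(\sigma_1,\sigma_2)$, $0\le\sigma_2\le\sigma_1\le 1$, and set
\[
B_j(\epsilon)=\begin{pmatrix} W & 0 \\ 0 & W\end{pmatrix}\begin{pmatrix} \Sigma & \sqrt{I-\Sigma^{2}} \\ \sqrt{I-\Sigma^{2}} & -\Sigma\end{pmatrix}\begin{pmatrix} V^{\dagger} & 0 \\ 0 & V^{\dagger}\end{pmatrix},
\]
where the $2\times 2$ block structure is taken with respect to the appended ancilla qubit. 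The middle factor is real symmetric and, since $\Sigma$ and $\sqrt{I-\Sigma^{2}}$ are simultaneously diagonal with $\Sigma^{2}+(I-\Sigma^{2})=I$, it squares to the identity; hence it is orthogonal and $B_j(\epsilon)$ is unitary. Reading off the ancilla $\langle 0|\cdots|0\rangle$ block gives $(I\otimes\langle 0|)B_j(\epsilon)(I\otimes|0\rangle)=W\Sigma V^{\dagger}=A$, which is Eq.~\eqref{eq:bj1}, while the $\langle 1|\cdots|0\rangle$ block is $W\sqrt{I-\Sigma^{2}}\,V^{\dagger}$, whose operator norm equals $\sqrt{1-\sigma_2^{2}}$ with $\sigma_2=\sigma_{\min}(A)$.

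It then remains to check $\sqrt{1-\sigma_{\min}(A)^{2}}\le 2\sqrt{\gamma|\epsilon|}$. Setting $D=O_j-I$, so that $\|D\|\le\gamma$ and $O_j(\epsilon)=I+\epsilon D$, for every unit vector $v$ one has $\|(I+\epsilon D)v\|\ge 1-|\epsilon|\,\|Dv\|\ge 1-|\epsilon|\gamma$, giving $\sigma_{\min}(O_j(\epsilon))\ge 1-|\epsilon|\gamma$; also $\|O_j(\epsilon)\|\le 1+|\epsilon|\gamma$. If $|\epsilon|\gamma\ge 1$ the desired inequality is immediate, since every block of a unitary has norm at most $1\le 2\sqrt{\gamma|\epsilon|}$. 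Otherwise $\sigma_{\min}(A)=\sigma_{\min}(O_j(\epsilon))/\|O_j(\epsilon)\|\ge(1-|\epsilon|\gamma)/(1+|\epsilon|\gamma)$, and a one-line computation yields $1-\sigma_{\min}(A)^{2}\le 1-\big((1-|\epsilon|\gamma)/(1+|\epsilon|\gamma)\big)^{2}=4|\epsilon|\gamma/(1+|\epsilon|\gamma)^{2}\le 4|\epsilon|\gamma$, i.e.\ Eq.~\eqref{eq:bj2}.

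I do not expect a genuine obstacle here; the only points that need care are that $O_j(\epsilon)$ is in general non-normal (so one must dilate via the SVD rather than a naive $\sqrt{I-AA^{\dagger}}$ block encoding, and must phrase the off-diagonal bound in terms of singular values rather than eigenvalues), and the edge case $|\epsilon|\gamma\ge 1$, which should be disposed of separately as above.
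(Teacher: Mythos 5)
Your proposal is correct and is essentially the paper's argument: since the polar decomposition $A=U'M$ with $M=(A^{\dagger}A)^{1/2}=V\Sigma V^{\dagger}$ gives $A=(U'V)\Sigma V^{\dagger}$, your SVD-based $B_j(\epsilon)$ is literally the same unitary as the paper's $(U'\otimes I)(M\otimes Z+\sqrt{I-M^2}\otimes X)$. The only (minor) difference is how the off-diagonal block's norm $\sqrt{1-\sigma_{\min}(A)^2}=\|I-A^{\dagger}A\|^{1/2}$ is bounded: you estimate $\sigma_{\min}(O_j(\epsilon))$ and $\|O_j(\epsilon)\|$ directly (handling the edge case $\gamma|\epsilon|\ge 1$ separately), while the paper uses $\|I-A^{\dagger}A\|\le 2\|I-A\|\le 4\gamma|\epsilon|$; both yield $2\sqrt{\gamma|\epsilon|}$.
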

\begin{proof}

Let $j$ and $\epsilon$ be given and define $A=\|O_j(\epsilon)\|^{-1} O_j(\epsilon)$. We may write $A=U'M$ where $U'$ is unitary and $M=(A^{\dagger}A)^{1/2}$ (polar decomposition of $A$).  Note that $\|M\|=\|A\|= 1$. Now let
\[
B=(U'\otimes I)\left(M\otimes Z+(I-M^2)^{1/2}\otimes X\right),
\]
where $Z$ and $X$ are single-qubit Pauli matrices. One can easily check that $B$ is unitary. Moreover
\[
\left(I\otimes \langle 0|\right) B \left(I\otimes | 0\rangle\right)=U'M=A,
\]
and
\begin{align}
\|\left(I\otimes \langle 1|\right)B \left(I\otimes |0\rangle\right) \|&= \|I-A^{\dagger}A\|^{1/2}\\
&\leq \left(\|I-A\|+\|A-A^{\dagger}A\|\right)^{1/2}\\
&\leq \left(2\|I-A\|\right)^{1/2},
\label{eq:adag}
\end{align}
where we used the triangle inequality along with the facts that $\|A\|=1$ and $\|I-A^{\dagger}\|=\|I-A\|$.  Now using the fact that 
\[
\|O_j(\epsilon)\|\leq 1+|\epsilon|\|O_j-I\|\leq 1+\gamma |\epsilon|,
\]
along with the triangle inequality, we get
\begin{equation}
\|I-A\|\leq \|\left(1+\gamma |\epsilon|\right)^{-1} \left(O_j(\epsilon)-I\right)\|+\|\left(1+\gamma |\epsilon|\right)^{-1}I-I\|\leq \left(\frac{2\gamma |\epsilon|}{1+\gamma |\epsilon|}\right)\leq 2\gamma|\epsilon|.
\label{eq:eyeminusa}
\end{equation}
Plugging into Eq.~\eqref{eq:adag}, we arrive at Eq.~\eqref{eq:bj2}.
\end{proof}

\begin{proof}[Proof of Lemma \ref{lem:(zero-free-region)}]
For each $j\in [n]$ consider the 2-qubit unitary $B_j(\epsilon)$ described by Lemma \ref{lem:tech}. Note that Eq.~\eqref{eq:bj1} implies (cf. Eq. ~\eqref{eq:eyeminusa})
\begin{equation}
\|\left(I\otimes \langle 0|\right) (B_j(\epsilon)-I) \left(I\otimes |0\rangle\right)\| \leq 2\gamma |\epsilon|,
\label{eq:bj3}
\end{equation}

Now for each $j\in [n]$ let us adjoin an ancilla qubit labeled $n+j$ so that $B_j(\epsilon)$ acts nontrivially on qubits $j$ and $n+j$, out of $2n$ qubits in total. Define $V_j(\epsilon)=(U^{\dagger}\otimes I)B_j(\epsilon) (U\otimes I)$ and let $S_j\subseteq [2n]$ be the qubits which it acts on nontrivially. In particular, 
\begin{equation}
S_i\subseteq \mathcal{L}(i) \cup \{n+i\} \qquad \qquad 1\leq i\leq n,
\label{eq:sj}
\end{equation}
where $\mathcal{L}(i)$ and all lightcones discussed below are with respect to the $n$-qubit unitary $U$. For future reference we also note that the unitaries $\{V_j(\epsilon)\}$ are commuting, i.e., 
\begin{equation}
[V_j(\epsilon), V_r(\epsilon)]=0 \quad 1\leq j\leq r\leq n.
\label{eq:comm}
\end{equation}
and that
\begin{equation}
\{i: j\in S_i\}\subseteq \begin{cases} \mathcal{L}_{\leftarrow}(j) \quad \mbox{if} & 1\leq j\leq n \\  j-n \quad \mbox{if}   & n+1\leq j\leq 2n.  \end{cases}.
\label{eq:sk}
\end{equation}
Indeed, from Eq.~\eqref{eq:sj} and Proposition \ref{prop:sym} we see that $j\in S_i$  only if either $j=n+i$, or the backward lightcone of $j$ contains $i$. 

Define $V(\epsilon)=\prod_{j} V_j(\epsilon)$.  Then (using Eq.~\eqref{eq:bj1})
\begin{equation}
f(\epsilon)=\left(\prod_{j=1}^{n} \|O_j(\epsilon)\|\right) \langle 0^{2n}| V(\epsilon) |0^{2n}\rangle.
\label{eq:f}
\end{equation}
Consider the probability distribution over $2n$-bit strings defined by
\[
p_{\epsilon}(z)=|\langle z|V(\epsilon)|0^{2n}\rangle|^2.
\]
To prove the lemma it suffices to show that for all $\epsilon$ satisfying $|\epsilon|\leq \epsilon_0$, where $\epsilon_0$ is given by Eq.~\eqref{eq:epsdisc},  we have
\[
p_{\epsilon}(0^{2n})>0.
\]
Indeed, using Eq.~\eqref{eq:f} we see that this implies $|f(\epsilon)|>0$, since for each $j$ we have 
\[
\|O_j(\epsilon)\|\geq 1-\gamma \epsilon_0 >0
\]
whenever $|\epsilon|\leq \epsilon_0$.

To this end, let us fix some $\epsilon$ satisfying $|\epsilon|\leq \epsilon_0$. Define events $E_j$ for $j=1,2,\ldots, 2n$ such that $E_j$ is the event that $z_j=1$ with respect to the probability distribution $p_{\epsilon}$. Then
\begin{align}
\mathrm{Pr}\left[E_j\right]&=\langle 0^{2n}| V^{\dagger}(\epsilon)|1\rangle\langle1|_j V(\epsilon)|0^{2n}\rangle\\
&=\langle 0^{2n}|Q_j|0^{2n}\rangle \qquad \text{ where } \qquad Q_j= \prod_{i: j\in S_i} V^{\dagger}_i(\epsilon)|1\rangle\langle1|_j  \prod_{i: j\in S_i} V_i(\epsilon)
\label{eq:Bk}
\end{align}
Here we used the fact that a gate $V^{\dagger}_i(\epsilon)$ such that $j\notin S_i$ has no support on qubit $j$ and thus commutes with both $|1\rangle \langle 1|_j$ (as well as all other unitaries $V^{\dagger}_r(\epsilon)$, cf. Eq.~\eqref{eq:comm}). All such gates appearing on the left can then be commuted through and cancel with their corresponding term $V_i(\epsilon)$ on the right.

The events $E_j$ and $E_k$ are independent whenever the corresponding operators $Q_j$ and $Q_k$ have disjoint support. 
Now for $1\leq j\leq n$ let $\bar{j}=j$ and for $n+1\leq j\leq 2n$ let $\bar{j}=n-j$.   The support of $Q_j$ satisfies

\begin{equation}
\mathrm{Support} (Q_j)\subseteq \mathcal{L}(\mathcal{L}_\leftarrow (\bar{j})) \cup \{n+\mathcal{L}_{\leftarrow}(\bar{j})\} \qquad \qquad 1\leq j\leq 2n,
\label{eq:supp}
\end{equation}
where we have defined $\{n+\mathcal{L}_{\leftarrow}(\bar{j})\}=\{n+r: r\in \mathcal{L}_{\leftarrow}(\bar{j})\}$.

Using Eq.~\eqref{eq:supp} we see that $Q_j$ and $Q_k$ have disjoint support unless 
\begin{equation}
\mathcal{L}(\mathcal{L}_\leftarrow (\bar{j}))\cap \mathcal{L}(\mathcal{L}_\leftarrow (\bar{k})) \neq \emptyset.
\label{eq:intersect}
\end{equation}
Using Proposition \ref{prop:sym} we see that the condition Eq.~\eqref{eq:intersect} implies
\[
\bar{k}\in \mathcal{L}_{\leftarrow}(\bar{j}, 4).
\]
Therefore each event $E_j$ is independent of all but at most $K$ others, where 
\begin{equation}
K=2\max_{1\leq i\leq n} |\mathcal{L}_{\leftarrow}(i, 4)|\leq 2\ell_4
\label{eq:dbound}
\end{equation}

We shall now upper bound the probability of each event $E_j$. First consider the case $j=n+r$ for some $1\leq r\leq n$. In this case 
\begin{equation}
\mathrm{Pr}\left[E_{n+r}\right]=\langle 0^{2n}| V_{r}^{\dagger}(\epsilon) |1\rangle\langle 1|_{n+r} V_{r}(\epsilon)|0^{2n}\rangle \leq 4\gamma |\epsilon| \qquad \qquad  1\leq r\leq n,
\label{eq:nplusl}
\end{equation}
where we used Eq.~\eqref{eq:bj2}. Next suppose $j\in\{1,2,\ldots, n\}$. In this case we have
\begin{equation}
\mathrm{Pr}\left[E_j\right]=\langle 0^{2n}|\prod_{i\in \mathcal{L}_{\leftarrow}(j)} V^{\dagger}_i(\epsilon)|1\rangle\langle1|_j  \prod_{i\in \mathcal{L}_{\leftarrow}(j)} V_i(\epsilon)|0^{2n}\rangle
=\alpha_j+\beta_j,
\label{eq:aplusb}
\end{equation}
where
\begin{equation}
\alpha_j=\langle 0^{2n}|\prod_{i\in \mathcal{L}_{\leftarrow}(j)} V^{\dagger}_i(\epsilon)\left(|1\rangle\langle1|_j\otimes |00\ldots 0\rangle\langle 00\ldots 0|_{n+\mathcal{L}_{\leftarrow}(j)}\right) \prod_{i\in \mathcal{L}_{\leftarrow}(j)} V_i(\epsilon)|0^{2n}\rangle
\label{eq:alpha}
\end{equation}
and
\begin{equation}
\beta_j=\langle 0^{2n}|\prod_{i\in \mathcal{L}_{\leftarrow}(j)} V^{\dagger}_i(\epsilon)\left(|1\rangle\langle1|_j\otimes (I-|00\ldots 0\rangle\langle 00\ldots 0|_{n+\mathcal{L}_{\leftarrow}(j)}\right)  \prod_{i\in \mathcal{L}_{\leftarrow}(j)} V_i(\epsilon)|0^{2n}\rangle.
\label{eq:beta}
\end{equation}

To upper bound $\alpha_j$, observe that each operator $I\otimes \langle0| V_i(\epsilon)I\otimes |0\rangle$ appearing in Eq.~\eqref{eq:alpha} can be approximated by the identity. In particular, Eq.~\eqref{eq:bj3} gives
\begin{equation}
\| I\otimes \langle 0|_{n+i}(V_i(\epsilon)-I) I\otimes |0\rangle_{n+i}\|\leq 2\gamma |\epsilon|.
\label{eq:vk}
\end{equation}
Eq.~\eqref{eq:vk} implies that the right-hand-side of Eq.~\eqref{eq:alpha} is close to zero (indeed, if all operators $V_i(\epsilon)$ in Eq.~\eqref{eq:alpha} were replaced by the identity then it would evaluate to zero). More precisely, we may combine Eqs.~(\ref{eq:alpha}, \ref{eq:vk}) and recursively use the triangle inequality to replace each gate $V_k(\epsilon)$ on the right by $I$. The errors add linearly, and we arrive at
\begin{align}
\alpha_j \leq 2\gamma |\epsilon|  |\mathcal{L}_{\leftarrow}(j)|.
\label{eq:abound}
\end{align}

To upper bound $\beta_j$, we expand
\[
I-|00\ldots 0\rangle\langle 00\ldots 0|_{n+\mathcal{L}_{\leftarrow}(j)}=\sum_{\stackrel{z\in \{0,1\}^{|\mathcal{L}_{\leftarrow}(j)|}}{ z\neq 00\ldots0}} |z\rangle\langle z|
\]
in Eq.~\eqref{eq:beta} and use Eqs.~(\ref{eq:bj1}, \ref{eq:bj2}) to obtain
\begin{equation}
\beta_j\leq \sum_{\stackrel{z\in \{0,1\}^{|\mathcal{L}_{\leftarrow}(j)|}}{ z\neq 00\ldots0}} (2\sqrt{\gamma |\epsilon|})^{2|z|}=(1+4\gamma|\epsilon|)^{|\mathcal{L}_{\leftarrow}(j)|}-1\leq e^{4\gamma |\epsilon| |\mathcal{L}_{\leftarrow}(j)|}-1\leq 8\gamma|\epsilon| |\mathcal{L}_{\leftarrow}(j)|.
\label{eq:bbound}
\end{equation}
where in the last line we used the facts that $e^x-1\leq 2x$ for $x\leq 1$ and $|\epsilon|\leq \epsilon_0 \leq (4\gamma |\mathcal{L}_{\leftarrow}(j)|)^{-1}$.

Putting together Eqs.(\ref{eq:abound}, \ref{eq:bbound}, \ref{eq:aplusb}, \ref{eq:nplusl}) we have the upper bound
\begin{equation}
\mathrm{Pr}[E_j]\leq 10\gamma |\epsilon| \ell_1
\label{eq:pbound}
\end{equation}
for all $1\leq j\leq 2n$.

Now from Eqs.~(\ref{eq:pbound}, \ref{eq:dbound}) we see that the Lov\'{a}sz local lemma guarantees $p_\epsilon(0^{2n})>0$ as long as 
\[
10\gamma |\epsilon| \ell_1 \cdot \exp(1)\cdot 2\ell_4  <1
\]
Upper bounding $\exp(1)\leq 3$ and rearranging, we arrive at Eq.~\eqref{eq:epsdisc}.
\end{proof}

\subsection{Zero-free region for random unitaries}
\label{sec:random}
In this section for convenience we specialize to the case $O(\epsilon)=\prod_{j=1}^{n} O_j(\epsilon)$ where
\[
O_j(\epsilon)=I+\epsilon Z_j.
\]
As in previous sections, we consider the polynomial $f(\epsilon)=\langle0^n|U^{\dagger} O(\epsilon) U|0^n\rangle$.
\begin{theorem}
Suppose $U$ is drawn from a unitary $2$-design and let $\alpha\in \{1,2,\ldots, \}$ be given. Then, with probability at least $1-n^{-\alpha}$, $f(\epsilon)$ is zero-free in a closed disk
\begin{equation}
\epsilon\leq R(n) \qquad \text{where} \qquad R(n)=1-\mathcal{O}(\log(n)/n).
\label{eq:randbound}
\end{equation}
\label{thm:rand}
\end{theorem}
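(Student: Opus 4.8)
The plan is to expand $f$ as a polynomial in $\epsilon$, bound the typical magnitude of its coefficients using the second-moment information supplied by the $2$-design hypothesis, and then deduce zero-freeness by a triangle-inequality dominance argument on a disk of radius $1-\calO(\log n/n)$.

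\textbf{Step 1: Pauli expansion of the coefficients.} Since $O(\epsilon)=\bigotimes_{j=1}^n(I+\epsilon Z_j)=\sum_{S\subseteq[n]}\epsilon^{|S|}Z_S$ with $Z_S=\prod_{j\in S}Z_j$, writing $|\psi\rangle=U|0^n\rangle$ gives $f(\epsilon)=\sum_{k=0}^n c_k\epsilon^k$ with $c_k=\sum_{S\subseteq[n],\,|S|=k}\langle\psi|Z_S|\psi\rangle$. In particular $c_0=1$, and every $c_k$ is real since $O(\epsilon)$ is Hermitian for real $\epsilon$. The target is: on a suitable high-probability event, $\sum_{k\ge1}|c_k|\,|\epsilon|^k<1$ whenever $|\epsilon|\le R(n)$, which by the triangle inequality yields $|f(\epsilon)|\ge 1-\sum_{k\ge1}|c_k|\,|\epsilon|^k>0$.

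\textbf{Step 2: Second moments from the $2$-design.} The $2$-design property forces $\EE_U(|\psi\rangle\langle\psi|)^{\otimes2}=(I+\mathrm{SWAP})/(2^n(2^n+1))$, the normalized projector onto the symmetric subspace. Using $\tr[(A\otimes B)\,\mathrm{SWAP}]=\tr[AB]$, $\tr[Z_S]=2^n[S=\emptyset]$, and $Z_SZ_{S'}=Z_{S\triangle S'}$, this gives, for $k\ge1$,
\[
\EE_U\big[\langle\psi|Z_S|\psi\rangle\,\langle\psi|Z_{S'}|\psi\rangle\big]=\frac{[S=S']}{2^n+1},
\qquad\text{hence}\qquad
\EE_U\,c_k^2=\frac{\binom nk}{2^n+1}.
\]
The crucial point is that this is smaller by a factor $\sim 2^{-n}$ than the trivial worst-case bound $|c_k|\le\binom nk$.

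\textbf{Step 3: Union bound and tail estimate.} By Markov's inequality, $\Pr_U[\,|c_k|>s_k\,]\le\binom nk/(s_k^2(2^n+1))$; choosing $s_k=n^{(\alpha+1)/2}\sqrt{\binom nk/(2^n+1)}$ makes each probability at most $n^{-\alpha-1}$, so a union bound over $k=1,\dots,n$ shows that with probability at least $1-n^{-\alpha}$ we have $|c_k|\le s_k$ for all $k\ge1$. On this event, Cauchy--Schwarz gives, for $|\epsilon|\le R:=1-\delta$,
\[
\sum_{k=1}^n|c_k|\,|\epsilon|^k
\le\frac{n^{(\alpha+1)/2}}{\sqrt{2^n+1}}\,\sqrt{n}\,\Big(\sum_{k=1}^n\binom nk R^{2k}\Big)^{1/2}
\le n^{(\alpha+2)/2}\left(\frac{1+R^2}{2}\right)^{n/2}\le n^{(\alpha+2)/2}\,e^{-n\delta/4},
\]
using $(1+R^2)/2\le e^{-(1-R^2)/2}\le e^{-\delta/2}$. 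Taking $\delta=2(\alpha+3)\log n/n$ makes the right-hand side equal to $n^{(\alpha+2)/2}\cdot n^{-(\alpha+3)/2}=n^{-1/2}<1$ (the finitely many small $n$ are absorbed by adjusting the constant). Therefore $f(\epsilon)\ne0$ on $|\epsilon|\le R(n)$ with $R(n)=1-2(\alpha+3)\log n/n=1-\calO(\log n/n)$.

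\textbf{Main obstacle.} There is no serious difficulty: once one recognizes that the $2$-design hypothesis pins down the second moment $\EE_U c_k^2$ and makes it exponentially small relative to $\binom nk$, and that this $2^{-n}$ gain exactly cancels the $2^n$ produced by $\sum_k\binom nk R^{2k}$ in the Cauchy--Schwarz step, the remainder is routine. The only care needed is bookkeeping the polynomial-in-$n$ factors from the Markov/union bound and fixing the constant in $\delta$ so that the tail sum is strictly below $1$; both cost only $\calO(\log n/n)$ in the radius.
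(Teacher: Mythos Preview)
Your proof is correct and follows the same strategy as the paper: expand $f(\epsilon)=\sum_k c_k\epsilon^k$, use the $2$-design property to bound $\EE_U c_k^2\le\binom{n}{k}/2^n$, then combine Markov's inequality with Cauchy--Schwarz to force $|f(\epsilon)-1|<1$ on a disk of radius $1-\calO(\log n/n)$. Your organization is in fact slightly cleaner than the paper's: you apply Markov to each $c_k$ individually and exploit the closed form $\sum_k\binom{n}{k}R^{2k}=(1+R^2)^n$, whereas the paper splits the sum at $k=n/3$ and controls the two ranges separately via Markov on the aggregates $\sum_{k\le n/3}|c_k|^2$ and $\sum_k|c_k|^2$; both routes yield the same conclusion.
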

Here the constant implied by the big-$\mathcal{O}$ notation depends on $\alpha$. Below we shall use the notation
$Z(s)=\prod_{j=1}^{n} Z_j^{s_j}$ and $X(s)=\prod_{j=1}^{n} X_j^{s_j}$ where $s\in \{0,1\}^n$ and $X_j, Z_j$ are Pauli operators acting on qubit $j$. We will use the following properties of unitary $2$-designs. 
\begin{lemma}
Suppose $U$ is drawn from a unitary $2$-design. Then
\[
\mathbb{E}_{U}\big[\langle 0^{2n}|U^{\dagger}\otimes U^{\dagger} \left(Z(r)\otimes Z(s)\right) U\otimes U|0^{2n}\rangle\big]=\begin{cases} 0 & , r\neq s\\ 2^{-n}\left(\frac{4^n-2^n}{4^n-1}\right) & , r=s\neq 0^n \\ 1 & , r=s=0^n \end{cases}.
\]
\label{lem:haar}
\end{lemma}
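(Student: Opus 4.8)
The plan is to compute the expectation over the 2-design directly using its defining property, namely that the average of any degree-$\le 2$ polynomial in the entries of $U$ (and $U^\dagger$) matches the Haar average. First I would observe that the quantity inside the expectation is precisely such a polynomial: $\langle 0^{2n}|U^\dagger\otimes U^\dagger (Z(r)\otimes Z(s)) U\otimes U|0^{2n}\rangle = \langle 0^n|U^\dagger Z(r) U|0^n\rangle \cdot \langle 0^n|U^\dagger Z(s) U|0^n\rangle$, which is quadratic in the matrix elements of $U$ and their conjugates. Hence $\mathbb{E}_U$ equals the corresponding Haar integral, and the problem reduces to evaluating $\int dU\, \langle 0^n|U^\dagger Z(r) U|0^n\rangle\langle 0^n|U^\dagger Z(s) U|0^n\rangle$ over the Haar measure on $\mathrm{U}(2^n)$.

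Next I would invoke the standard second-moment (Weingarten) formula for the Haar measure, or equivalently use the well-known identity $\int dU\, U^\dagger A U \otimes U^\dagger B U = c_1 (\tr A\,\tr B)\, I\otimes I + c_2\, (\tr A\,\tr B)\,\mathrm{SWAP} + \dots$; more cleanly, I would use that $\int dU\, \langle \phi|U^\dagger A U|\phi\rangle\langle\phi|U^\dagger B U|\phi\rangle$ for a fixed unit vector $|\phi\rangle$ depends only on $\tr A$, $\tr B$, and $\tr(AB)$, via the formula
\[
\int dU\, \langle\phi|U^\dagger A U|\phi\rangle\langle\phi|U^\dagger B U|\phi\rangle = \frac{\tr A\,\tr B + \tr(AB)}{D(D+1)},
\]
where $D=2^n$. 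Applying this with $A=Z(r)$, $B=Z(s)$, $|\phi\rangle=|0^n\rangle$: we have $\tr Z(r) = D\,\delta_{r,0^n}$, and $\tr(Z(r)Z(s)) = \tr Z(r\oplus s) = D\,\delta_{r,s}$ since $Z(r)Z(s)=Z(r\oplus s)$ (the $Z$'s commute). Substituting gives, for $r\ne s$, numerator $0+0=0$; for $r=s\ne 0^n$, numerator $0 + D$, so the value is $D/(D(D+1)) = 1/(D+1) = 1/(2^n+1)$; and for $r=s=0^n$, numerator $D^2 + D$, giving $(D^2+D)/(D(D+1)) = 1$.

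The remaining issue is a cosmetic mismatch between $1/(2^n+1)$ and the claimed $2^{-n}\frac{4^n-2^n}{4^n-1}$; I would simply check they agree: $2^{-n}\frac{4^n-2^n}{4^n-1} = 2^{-n}\frac{2^n(2^n-1)}{(2^n-1)(2^n+1)} = 2^{-n}\frac{2^n}{2^n+1} = \frac{1}{2^n+1}$, as desired. The main obstacle is really just getting the Haar second-moment formula in exactly the right normalization — one must be careful that the two $U$'s and two $U^\dagger$'s pair up through both the identity and the swap permutation, and track the Weingarten coefficients $\frac{1}{D^2-1}$ and $-\frac{1}{D(D^2-1)}$ correctly; but because the fixed vector $|0^n\rangle$ and the diagonal Pauli structure make all the traces collapse to simple Kronecker deltas, no genuinely hard computation arises. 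I would also note explicitly that only the $2$-design property (not full Haar randomness) is used, since the integrand is a balanced degree-$(2,2)$ monomial in $U,\bar U$.
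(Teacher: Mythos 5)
Your proof is correct, and it takes a genuinely different route from the paper's. You reduce everything to a single closed-form second-moment formula
\[
\int dU\, \langle\phi|U^{\dagger}AU|\phi\rangle\,\langle\phi|U^{\dagger}BU|\phi\rangle
  =\frac{\tr A\,\tr B+\tr(AB)}{D(D+1)},
\]
which follows from the fact that the twirl of $|\phi\rangle\langle\phi|^{\otimes 2}$ is the normalized symmetric projector $(I+\mathrm{SWAP})/[D(D+1)]$; all three cases then drop out of the Kronecker deltas $\tr Z(r)=D\delta_{r,0^n}$ and $\tr(Z(r)Z(s))=D\delta_{r,s}$. The paper instead argues each case separately: for $r\neq s$ it picks an $X$-type Pauli that anticommutes with $Z(r)$ but commutes with $Z(s)$ and uses Haar invariance to show the quantity equals its own negative; for $r=s\neq 0^n$ it uses that all nontrivial Paulis are unitarily equivalent (hence give the same second moment), averages over all $4^n-1$ of them, and evaluates the resulting $\sum_Q Q\otimes Q=2^n\,\mathrm{SWAP}$ sum. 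Your approach is more systematic and uniform, at the cost of needing the explicit Weingarten/symmetric-projector formula; the paper's is slightly more elementary in that it only uses invariance of the Haar measure and the Pauli-SWAP identity, without writing down the second-moment operator explicitly. Both yield $1/(2^n+1)$ in the middle case, which you correctly match to the stated $2^{-n}(4^n-2^n)/(4^n-1)$, and your remark that only the balanced degree-$(2,2)$ structure is used, so a $2$-design suffices, is also made (implicitly) in the paper.
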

\begin{proof}
Since $U$ is drawn from a unitary $2$-design, we may WLOG evaluate the expectation value over the Haar measure.

First consider the case $r\neq s$ and assume WLOG that $r\neq 0^n$. Then we may choose an $X$-type Pauli $X(q)$ such that 
\[
\{X(q), Z(r)\}=0 \qquad \text{ and } \qquad [X(q), Z(s)]=0.
\]
Then using the invariance of the Haar measure we have
\begin{align}
\mathbb{E}_{U}\big[\langle 0^{2n}|&U^{\dagger}\otimes U^{\dagger} \left(Z(r)\otimes Z(s)\right) U\otimes U|0^{2n}\rangle\big]\\
&=\mathbb{E}_{U}\big[\langle 0^{2n}|U^{\dagger}\otimes U^{\dagger}  \left(X(q)Z(r)X(q)\otimes X(q)Z(s)X(q)\right) U\otimes U|0^{2n}\rangle\big]\\
&=- \mathbb{E}_{U}\big[\langle 0^{2n}|U^{\dagger}\otimes U^{\dagger} \left(Z(r)\otimes Z(s)\right) U\otimes U|0^{2n}\rangle\big].
\end{align}
and therefore the above quantity is zero, establishing the case $r\neq s$.

Next suppose that $r=s\neq 0^n$. In this case the quantity of interest is (letting $\mathcal{P}_n$ denote the set of $4^n$ $n$-qubit Pauli operators)
\begin{align*}
\mathbb{E}_{U}\big[\left|\langle 0^{n}|U^{\dagger}Z(s)U|0^{n}\rangle\right|^2\big]
&=\frac{1}{4^n-1}\mathbb{E}_{U}\big[\sum_{Q\in \mathcal{P}_n\setminus{I}}\langle 0^{2n}|U^{\dagger}\otimes U^{\dagger} \left(Q\otimes Q\right) U\otimes U|0^{2n}\rangle\big]\\
&=\left(\frac{4^n}{4^{n}-1}\right)\frac{1}{2^n}\mathbb{E}_{U}\big[\langle 0^{2n}|U^{\dagger}\otimes U^{\dagger} \left(SWAP\right)^{\otimes n} U\otimes U|0^{2n}\rangle\big]-\frac{1}{4^n-1}\\
&=\left(\frac{4^n}{4^{n}-1}\right)\frac{1}{2^n}-\frac{1}{4^n-1},
\end{align*}
which completes the proof of the second case. In the above we used the fact that $X\otimes X+Y\otimes Y+Z\otimes Z=2SWAP-I$ where $SWAP$ is the two-qubit unitary which permutes the qubits.  The third case $r=s=0^n$ is trivial.

\end{proof}

\begin{proof}[Proof of Theorem \ref{thm:rand}]
We may write
\[
f(\epsilon)=\sum_{k=0}^{n} c_k \epsilon^k 
\]
where
\[
c_k=\sum_{s\in \{0,1\}^n: |s|=k} \langle 0^n|U^{\dagger} Z(s) U|0^n\rangle.
\]
Note that $c_0=1$. A simple computation using Lemma \ref{lem:haar} gives
\begin{equation}
\mathbb{E}_U\left[ |c_k|^2\right]\leq \frac{1}{2^n} {n \choose k} \qquad 1\leq k\leq n.
\label{eq:eu}
\end{equation}
Now suppose $|\epsilon|\leq R\leq 1$. Then
\[
|f(\epsilon)-1|=|\sum_{k=1}^{n} c_k \epsilon^k|\leq \sum_{k=1}^{n} |c_k| R^k\leq \sum_{k=1}^{n/3} |c_k|+ \sum_{k=n/3}^{n} |c_k|R^k
\]
Applying Cauchy-Schwarz to each of the two terms on the RHS gives
\begin{equation}
|f(\epsilon)-1|\leq \sqrt{n/3} \left(\sum_{k=1}^{n/3} |c_k|^2\right)^{1/2}+\left(\sum_{k=1}^{n} |c_k|^2\right)^{1/2} \frac{R^{n/3}}{\sqrt{1-R^2}}.
\label{eq:fminus1}
\end{equation}
Using Eq.~\eqref{eq:eu} we get
\[
\mathbb{E}_U \left[\sum_{k=1}^{n/3} |c_k|^2\right]\leq \frac{1}{2^n}\sum_{k=1}^{n/3} {n \choose k}\leq 2^{(H(1/3)-1) n}\leq 2^{-0.08n},
\]
where $H(\cdot)$ is the binary entropy function. Using Markov's inequality, we have that with probability at least $1-(1/2)n^{-\alpha}$ over the choice of $U$, 
\begin{equation}
\sum_{k=1}^{n/3} |c_k|^2\leq \frac{2n^\alpha}{2^{0.08n}}.
\label{eq:cond1}
\end{equation}
Likewise we have
\[
\mathbb{E}_U \left[\sum_{k=1}^{n} |c_k|^2\right]\leq 1
\]
and with probability at least $1-(1/2)n^{-\alpha}$  over the choice of $U$, 
\begin{equation}
\sum_{k=1}^{n} |c_k|^2\leq 2n^{\alpha}
\label{eq:cond2}
\end{equation}

By a union bound we have that with probability at least $1-n^{-\alpha}$ both Eq.~\eqref{eq:cond1} and Eq.~\eqref{eq:cond2} hold. To complete the proof we show that if both of these events occur then the claimed bound on the zero-free radius of $f$ holds. Indeed, plugging Eqs.~(\ref{eq:cond1},\ref{eq:cond2}) into Eq.~\eqref{eq:fminus1} gives
\begin{equation}
|f(\epsilon)-1|\leq \left(\frac{2n^{\alpha+1}}{3}\right)^{1/2} 2^{-0.04n}+\sqrt{2n^{\alpha}} \frac{R^{n/3}}{\sqrt{1-R^2}}.
\end{equation}
Now for all sufficiently large $n$ we may choose $R(n)=1-\mathcal{O}(\log(n)/n)$ to make the RHS at most $1/2$. This establishes that $|f(\epsilon)|\geq 1/2$ for all 
$|\epsilon|\leq R(n)$ and therefore that $f$ is zero-free in this disk.
\end{proof}

\section{Additive approximation for general shallow circuits }

In this section we give a subexponential classical algorithm for estimating the absolute value $|\langle 0^n|U^{\dagger} O U|0^n\rangle|$ of the mean to a given additive error, for a tensor product observable $O=O_1\otimes O_2\ldots \otimes O_n$. In the case where each observable $O_j$ is positive semidefinite this provides a subexponential algorithm for the mean value problem.

\begin{theorem}
Let $U$ be an $n$-qubit, depth-$d$ quantum circuit and suppose that $\|O_j\|=1$ for all $j\in [n]$. There exists a classical algorithm which, given $\delta \in (0,1/2)$, computes an estimate $E \in \mathbb{R}$ such that
\[
\left|E-|\langle 0^n|U^{\dagger} O U|0^n\rangle|\right|\leq \delta.
\]
The runtime of the algorithm is upper bounded as $2^{\tilde{\mathcal{O}}\left(4^d\sqrt{n \log(\delta^{-1})}\right)}$.
\label{thm:additive}
\end{theorem}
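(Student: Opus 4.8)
The plan is to reduce the estimation of $|\langle 0^n|U^\dagger O U|0^n\rangle|$ to the estimation of an \emph{output probability} of a related constant-depth circuit on $2n$ qubits, and then to apply to that probability the $\mathrm{OR}$-polynomial method outlined in the introduction. The observation driving the reduction is that the squared modulus of an amplitude is literally an output probability, so $|\mu|^2$ becomes a quantity of the form $\langle 0^m|\Pi|0^m\rangle$ with $\Pi$ the projector onto the output state of a circuit --- exactly the case handled by the $\mathrm{OR}$ trick.

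\emph{Step 1: dilation and reduction to an output probability.} Since $\|O_j\|=1$, the construction in the proof of Lemma~\ref{lem:tech} (applied with $\epsilon=1$, where no rescaling is needed) gives, for each $j\in[n]$, a two-qubit unitary $B_j$ acting on qubit $j$ and a fresh ancilla $a_j$ with $O_j=(I\otimes\langle 0|_{a_j})B_j(I\otimes|0\rangle_{a_j})$. Let $B=\bigotimes_{j=1}^n B_j$, a depth-$1$ circuit on the $2n$ qubits, and set $V=(U^\dagger\otimes I_a)B(U\otimes I_a)$. Commuting the ancilla projectors through $U\otimes I_a$ yields
\[
\langle 0^n|U^\dagger O U|0^n\rangle=\langle 0^{2n}|V|0^{2n}\rangle,\qquad\text{so}\qquad |\langle 0^n|U^\dagger O U|0^n\rangle|^2=|\langle 0^{2n}|V|0^{2n}\rangle|^2=\nu,
\]
where $\nu$ is the probability of observing $0^{2n}$ upon measuring $V|0^{2n}\rangle$ in the standard basis. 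As a product of $U^\dagger$ (depth $d$), $B$ (depth $1$) and $U$ (depth $d$), $V$ is a depth-$(2d+1)$ two-qubit-gate circuit, so by Eq.~\eqref{eq:ellconst} its lightcones (taken here with respect to $V$) obey $\ell_1(V)\le 2^{2d+1}=O(4^d)$.

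\emph{Step 2: approximating $\nu$.} Write $|\chi\rangle=V|0^{2n}\rangle$ and $\Pi=|\chi\rangle\langle\chi|$, so $\nu=\langle 0^{2n}|\Pi|0^{2n}\rangle$. As around Eq.~\eqref{eq:orfunction}, $\chi$ is the unique joint null vector of the commuting projectors $V|1\rangle\langle1|_j V^\dagger$ ($j\in[2n]$); in the basis $\{V|z\rangle\}$ the operator $N=\sum_{j=1}^{2n}V|1\rangle\langle1|_j V^\dagger$ is the Hamming-weight operator and $I-\Pi=\mathrm{OR}(N)$, where $\mathrm{OR}(k)$ denotes the value of $\mathrm{OR}$ on a weight-$k$ string. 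Using the near-optimal bounded-error polynomial approximation of $\mathrm{OR}$ from quantum query complexity~\cite{buhrman1999bounds,de2008note}, fix a real univariate polynomial $r$ of degree $D=O(\sqrt{n\log(\eta^{-1})})$ with $|r(k)-\mathrm{OR}(k)|\le\eta$ for all integers $0\le k\le 2n$, and put $P=I-r(N)$; then $\|P-\Pi\|=\max_{0\le k\le 2n}|r(k)-\mathrm{OR}(k)|\le\eta$, hence $|\langle 0^{2n}|P|0^{2n}\rangle-\nu|\le\eta$. To evaluate $\tilde\nu:=\langle 0^{2n}|P|0^{2n}\rangle$, expand
\[
r(N)=\sum_{l=0}^{D} a_l\Big(\sum_{j=1}^{2n}V|1\rangle\langle1|_j V^\dagger\Big)^{l},
\]
which writes $P$ as a sum of at most $(D+1)(2n)^D=2^{\tilde O(\sqrt{n\log\eta^{-1}})}$ terms, each of the form $V\big(\bigotimes_{j\in J}|1\rangle\langle1|_j\big)V^\dagger$ with $|J|\le D$, hence supported on $\bigcup_{j\in J}\mathcal L_\leftarrow(j)$, a set of at most $D\,\ell_1(V)=O(4^d\sqrt{n\log\eta^{-1}})$ qubits. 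The contribution $\langle 0^{2n}|V(\bigotimes_{j\in J}|1\rangle\langle1|_j)V^\dagger|0^{2n}\rangle$ of each term is computed by restricting $V$ to this lightcone and brute force, in time $2^{O(4^d D)}$; these contributions are nonnegative reals, so $\tilde\nu\in\mathbb R$.

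\emph{Step 3: output, parameters, and the main obstacle.} Return $E=\sqrt{\max(0,\tilde\nu)}$. Since $|\sqrt a-\sqrt b|\le\sqrt{|a-b|}$ for $a,b\ge0$ and $|\max(0,\tilde\nu)-\nu|\le\eta$, we get $|E-|\langle 0^n|U^\dagger O U|0^n\rangle||\le\sqrt\eta$; choosing $\eta=\delta^2$ makes this $\le\delta$ while keeping $D=O(\sqrt{n\log\delta^{-1}})$, and the total running time is $2^{\tilde O(\sqrt{n\log\delta^{-1}})}\cdot 2^{O(4^d\sqrt{n\log\delta^{-1}})}=2^{\tilde O(4^d\sqrt{n\log\delta^{-1}})}$, as claimed. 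Steps~1 and~3 are essentially bookkeeping; the crux is Step~2: establishing the sharp bound $\widetilde{\deg}_\eta(\mathrm{OR}_m)=O(\sqrt{m\log(1/\eta)})$ (so that $D$ scales like $\sqrt{n\log\delta^{-1}}$ rather than $\sqrt n\,\log\delta^{-1}$), controlling the bit-size of the coefficients $a_l$, and verifying that conjugation by $V$ keeps the support of each monomial $\bigotimes_{j\in J}|1\rangle\langle1|_j$ to $O(4^d D)$ qubits, so that the per-term brute force --- over $2^{\tilde O(\sqrt n)}$ terms --- stays within the stated budget.
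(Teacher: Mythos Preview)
Your proof is correct and follows essentially the same approach as the paper: dilation via Lemma~\ref{lem:tech} to reduce to an output probability of a depth-$(2d+1)$ circuit on $2n$ qubits, then the OR-polynomial approximation of the output projector (packaged in the paper as Lemma~\ref{lem:sub}) with lightcone-based term-by-term evaluation. The only cosmetic difference is in extracting the square root: you use $|\sqrt a-\sqrt b|\le\sqrt{|a-b|}$ with $\eta=\delta^2$, whereas the paper does a case split (output $0$ if $q<\delta^2/2$, else $\sqrt q$) and bounds the error via $|\sqrt q-\sqrt\nu|\le|q-\nu|/\sqrt q$.
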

Theorem \ref{thm:additive} is obtained as a straightforward corollary of the following algorithm for additively approximating output probabilities of constant-depth circuits.
\begin{lemma}
Let $V$ be an $n$-qubit, depth-$d$ quantum circuit. There exists a classical algorithm which, given $\delta\in (0,1/2)$, computes an estimate $q \in \mathbb{R}$ such that
\[
\left|q-|\langle 0^n|V|0^n\rangle|^2\right| \leq \delta.
\]
The runtime of the algorithm is upper bounded as $2^{\tilde{\calO}\left(2^d\sqrt{n \log(\delta^{-1})}\right)}$.
\label{lem:sub}
\end{lemma}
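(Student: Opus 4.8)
\textbf{Proof proposal for Lemma~\ref{lem:sub}.}

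The plan is to approximate the rank-one projector $|\psi\rangle\langle\psi|$ onto the output state $|\psi\rangle = V|0^n\rangle$ by an operator $P$ with a sufficiently small classical description, and then output $q = \langle 0^n|P|0^n\rangle$. As sketched in the introduction, the key structural fact is that $I - |\psi\rangle\langle\psi|$ acts as the multivariate $\mathrm{OR}$ function in the rotated computational basis $\{|\hat z\rangle = V|z\rangle\}$: indeed $|\psi\rangle$ is the unique state annihilated by all of the commuting projectors $\Pi_j = V|1\rangle\langle 1|_j V^\dagger$, and $\sum_j \Pi_j$ counts the Hamming weight of $z$, so $I - |\psi\rangle\langle\psi| = \mathrm{OR}\bigl(\Pi_1,\ldots,\Pi_n\bigr)$ with $\mathrm{OR}(z) = 1 - [z = 0^n]$. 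Therefore, if $r(t_1,\ldots,t_n)$ is any real polynomial with $|r(z) - \mathrm{OR}(z)| \le \delta$ for all $z \in \{0,1\}^n$, then $P \equiv I - r(\Pi_1,\ldots,\Pi_n)$ satisfies $\|\,|\psi\rangle\langle\psi| - P\,\| \le \delta$ in operator norm (the $\Pi_j$ are commuting $0/1$ operators, so we can pass to their joint eigenbasis and the bound is inherited pointwise). Consequently $|q - |\langle 0^n|\psi\rangle|^2| = |\langle 0^n|(P - |\psi\rangle\langle\psi|)|0^n\rangle| \le \delta$, which is the desired approximation.

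The second ingredient is the choice of polynomial $r$ and the bookkeeping of the cost to evaluate $\langle 0^n|r(\Pi_1,\ldots,\Pi_n)|0^n\rangle$ exactly. Since $\mathrm{OR}$ is symmetric, $r$ can be taken to depend only on the Hamming weight $w = t_1 + \cdots + t_n$, i.e. $r(z) = \tilde r(|z|)$ for a univariate polynomial $\tilde r$ of some degree $D$ with $|\tilde r(0) - 0| \le \delta$ and $|\tilde r(w) - 1| \le \delta$ for $1 \le w \le n$. The optimal such $D$ is $D = \Theta\bigl(\sqrt{n\log(\delta^{-1})}\bigr)$, by the standard polynomial-approximation bounds for $\mathrm{OR}$ coming from quantum query complexity \cite{buhrman1999bounds, de2008note} (a Chebyshev-type construction: rescale a Chebyshev polynomial that is small on $[0,1/n]$-type intervals and near $1$ on the rest). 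Expanding $r$ in the monomial basis in the $\Pi_j$'s, and using $\Pi_j^2 = \Pi_j$, we get that $r(\Pi_1,\ldots,\Pi_n)$ is a linear combination of products $\prod_{j\in A}\Pi_j$ over subsets $A$ with $|A| \le D$; there are $\sum_{k\le D}\binom{n}{k} = 2^{\tilde{\mathcal O}(\sqrt{n\log(\delta^{-1})})}$ such subsets. For each fixed $A$, the quantity $\langle 0^n|\prod_{j\in A}\Pi_j|0^n\rangle = \langle 0^n|V\bigl(\prod_{j\in A}|1\rangle\langle 1|_j\bigr)V^\dagger|0^n\rangle$ only involves the circuit gates in the union of the lightcones of the qubits in $A$: since $V$ has depth $d$, this is a circuit on at most $2^d|A| \le 2^d D$ qubits, whose amplitude we can evaluate by brute force in time $2^{\mathcal O(2^d D)} = 2^{\tilde{\mathcal O}(2^d\sqrt{n\log(\delta^{-1})})}$. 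Multiplying by the number of subsets and summing gives the claimed runtime $2^{\tilde{\mathcal O}(2^d\sqrt{n\log(\delta^{-1})})}$.

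Two technical points need care. First, one must verify that restricting to the lightcone is legitimate here even though $\prod_{j\in A}|1\rangle\langle 1|_j$ is a projector acting on $|A|$ qubits rather than a single qubit: this follows because $V^\dagger \mathcal A_{\mathcal L(A)^c} V$ acts trivially where the projector lives, so after cancelling the gates outside $\mathcal L(A)$ between $V$ and $V^\dagger$, the surviving expression is an amplitude of a circuit supported on $\mathcal L(A)$, sandwiched between $|0\rangle$ and $\langle 0|$ on those qubits only; the qubits outside contribute a factor of $1$. Second, one must confirm that the coefficients of $\tilde r$ in the monomial basis (or, after substituting $w = \sum_j t_j$ and multilinearizing with $t_j^2 = t_j$, the coefficients on the subset-indexed monomials) do not blow up so badly as to spoil the additive-error guarantee when summed with floating-point or exact-rational arithmetic; since we are free to work with exact arithmetic this is only a matter of bounding bit-lengths, and the Chebyshev construction has coefficients of at most singly-exponential size, which is absorbed into the $\tilde{\mathcal O}$.

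The main obstacle I anticipate is obtaining the right polynomial $\tilde r$ with the correct degree \emph{and} controlled coefficient growth simultaneously, and then carefully tracking the multilinearization step $w^m = (\sum_j t_j)^m \mapsto \sum_{|A|\le m} (\text{multinomial})\prod_{j\in A} t_j$ so that the total number of distinct subsets that appear is genuinely $\sum_{k \le D}\binom{n}{k}$ and not something larger. Everything else — the operator-norm bound, the lightcone restriction, the brute-force amplitude evaluation — is routine once the polynomial is in hand. Finally, Theorem~\ref{thm:additive} follows by writing a general $O_j$ with $\|O_j\| = 1$ in terms of at most two rank-one terms (or via its polar/spectral decomposition), expanding $O = \bigotimes_j O_j$ into $2^n$ products of rank-one projectors but grouping by lightcone overlap so that the effective support is again $2^d$ times a small set, and applying Lemma~\ref{lem:sub} to the dilated circuit; the extra $2^d$ in the exponent (giving $4^d$) comes from the depth-doubling incurred in this reduction.
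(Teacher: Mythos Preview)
Your proof of the lemma is correct and essentially identical to the paper's: both approximate $|\psi\rangle\langle\psi|$ by a degree-$L=\mathcal O(\sqrt{n\log(\delta^{-1})})$ polynomial in the commuting projectors $\Pi_j$ (equivalently, in the Hamming-weight operator $H=\sum_j \Pi_j$), invoke the same optimal OR-approximation bound from quantum query complexity, and evaluate each term by lightcone restriction plus brute force. The one organizational difference happens to dissolve the ``main obstacle'' you flagged: the paper never multilinearizes into subset-indexed monomials, but simply computes each moment $\langle 0^n|H^r|0^n\rangle$ as the raw sum of $n^r$ terms $\langle 0^n|V\,|1\cdots1\rangle\langle1\cdots1|_S\,V^\dagger|0^n\rangle$ with $|S|\le r$, and then combines these $L$ numbers linearly with the efficiently computable coefficients of the univariate polynomial $g$---so the coefficient-growth and multilinearization bookkeeping you worried about never arises, and $n^L=2^{\tilde{\mathcal O}(\sqrt{n\log(\delta^{-1})})}$ already fits the runtime budget.

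As an aside, your sketch of the reduction to Theorem~\ref{thm:additive} is muddled: expanding $O$ into $2^n$ rank-one products cannot be salvaged by ``grouping by lightcone overlap.'' The paper's route is a single clean unitary dilation $O_j=(I\otimes\langle0|)B_j(I\otimes|0\rangle)$ (Lemma~\ref{lem:tech}), which turns $|\langle 0^n|U^\dagger O U|0^n\rangle|^2$ directly into one output probability $|\langle 0^{2n}|(U^\dagger\otimes I)B(U\otimes I)|0^{2n}\rangle|^2$ of a depth-$(2d{+}1)$ circuit on $2n$ qubits---this is where the $4^d$ comes from, as you correctly guessed.
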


Let us now see how Theorem \ref{thm:additive} follows from Lemma \ref{lem:sub}. Suppose we are given $\delta$ and $U$. From Lemma \ref{lem:tech}, for each $j=1,2,\ldots, n$ we may efficiently compute a two qubit unitary $B_j$ such that 
\[
\left(I\otimes \langle 0|\right) B_j \left(I\otimes |0\rangle \right)=O_j
\]
where we used the theorem's assumption that $\|O_j\|=1$. Now consider a $2n$ qubit system where for each $j$ we adjoin an ancilla qubit $n+j$ and the unitary $B_j$ acts between these two qubits. Define $B=\bigotimes_{j=1}^{n} B_j$. We may then use the algorithm from Lemma \ref{lem:sub} with $V=(U^{\dagger}\otimes I)B (U\otimes I)$, $n'=2n$, $d'=2d+1$,  and $\delta'=0.5 \delta^2$ to obtain an estimate $q$ such that
\begin{equation}
\left|q-|\langle 0^{2n}|(U^{\dagger}\otimes I)B (U\otimes I)|0^{2n}\rangle|^2\right|\leq 0.5\delta^2.
\label{eq:qerror1}
\end{equation}
or equivalently 
\begin{equation}
\left|q-|\langle 0^{n}|U^{\dagger}O U|0^{n}\rangle|^2\right|\leq 0.5\delta^2.
\label{eq:qerror}
\end{equation}
Now if $q<0.5 \delta^2$ then the above implies that $|\langle 0^n|U^{\dagger} O U|0^n\rangle|<\delta$ and in this case we simply output $E=0$ as our $\delta$-error estimate. On the other hand if $q>0.5 \delta^2$ then we output $E=\sqrt{q}$ as our estimate; using Eq.~\eqref{eq:qerror} we get
\[
\left|\sqrt{q}-|\langle 0^n|U^{\dagger} O U|0^n\rangle|\right|\leq \frac{0.5\delta^2}{\sqrt{q}+|\langle 0^n|U^{\dagger} O U|0^n\rangle|}\leq \frac{0.5\delta^2}{\sqrt{q}}<\frac{\delta}{\sqrt{2}}.
\]

Lemma \ref{lem:sub} is a simple consequence of the following well-known fact \cite{buhrman1999bounds, de2008note}.
\begin{lemma}[\cite{buhrman1999bounds}]
Let $\delta\in (0,1/2)$ be given. There exists a univariate polynomial $g: \mathbb{R}\rightarrow \mathbb{R}$ of degree 
\begin{equation}
L=\mathcal{O}\left(\sqrt{n\log(\delta^{-1})}\right)
\label{eq:L}
\end{equation}
such that 
\begin{equation}
g(0)=1 \qquad \text{ and } \qquad |g(c)|\leq \delta \qquad \text{ for each } \quad c=1,2,\ldots, n.
\label{eq:fconditions}
\end{equation}
The coefficients of the polynomial can be computed in time polynomial in $n$. 
\label{lem:polyapprox}
\end{lemma}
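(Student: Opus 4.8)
The statement is the standard Chebyshev/polynomial‑method approximation of the $\mathrm{OR}_n$ function from quantum query complexity, so the plan is to reproduce the construction of \cite{buhrman1999bounds} (cf.\ also \cite{de2008note}). Recall that the degree‑$L$ Chebyshev polynomial of the first kind, $T_L$, satisfies $|T_L(x)|\le 1$ for $x\in[-1,1]$ while immediately outside that interval it grows exponentially: writing $T_L(1+u)=\cosh\!\big(L\,\mathrm{arccosh}(1+u)\big)$ and using $\mathrm{arccosh}(1+u)\ge \tfrac12\sqrt{2u}$ for $0<u\le\tfrac12$ gives $T_L(1+u)\ge \tfrac12\exp\!\big(L\sqrt{u/2}\big)$. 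The first step is to choose the affine map $\phi(x)=\frac{2x-(n+1)}{n-1}$, which sends $[1,n]$ onto $[-1,1]$; in particular it carries the integers $1,\dots,n$ into $[-1,1]$ and carries $0$ to $\phi(0)=-\big(1+\tfrac{2}{n-1}\big)$, just to the left of the interval. Setting $g_0(x)=T_L(\phi(x))/T_L(\phi(0))$ one gets $g_0(0)=1$ and, for every $c\in\{1,\dots,n\}$, $|g_0(c)|\le 1/|T_L(\phi(0))|\le 2\exp(-L/\sqrt{n-1})$. This already yields a polynomial obeying Eq.~\eqref{eq:fconditions} once $L=\Omega\big(\sqrt n\,\log(\delta^{-1})\big)$.

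Obtaining the sharper bound Eq.~\eqref{eq:L}, with $\log(\delta^{-1})$ \emph{inside} the square root, requires one additional ingredient, and this is exactly where the naive Chebyshev estimate falls short. Following \cite{buhrman1999bounds}, the plan is to start from the $L=\Theta(\sqrt n)$ instance of the construction above, which produces $g_0$ with $g_0(0)=1$ and $|g_0(c)|\le \delta_0$ for some fixed constant $\delta_0<1$, and then to amplify the error down to $\delta$ by composing with a low‑degree univariate polynomial $A$ with $A(1)=1$ and $|A(y)|\le\delta$ for $y\in[-\delta_0,\delta_0]$ — itself another rescaled Chebyshev polynomial, of degree $O(\log(\delta^{-1}))$. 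A black‑box composition only gives $O(\sqrt n\,\log(\delta^{-1}))$; the point of \cite{buhrman1999bounds} is that the amplification can be interleaved with the underlying (Grover) search structure so that the total degree is $O(\sqrt{n\log(\delta^{-1})})$. Equivalently, one may extract $g$ directly from the polynomial‑method analysis of the optimal bounded‑error quantum search algorithm of \cite{buhrman1999bounds}, whose query complexity is $\Theta(\sqrt{n\log(\delta^{-1})})$, so that its acceptance probability is a univariate polynomial in $|x|$ of the required degree.

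For the algorithmic claim, the coefficients of $g$ in the monomial basis are produced in time $\mathrm{poly}(n)$: those of $T_L$ follow from the three‑term recurrence $T_{m+1}(x)=2xT_m(x)-T_{m-1}(x)$, the substitution $x\mapsto\phi(x)$ and the division by the scalar $T_L(\phi(0))$ are elementary, and the amplification step multiplies together $O(1)$ polynomials of degree at most $L=O(\sqrt{n\log(\delta^{-1})})=\mathrm{poly}(n)$; all intermediate coefficients are rationals of polynomial bit‑size.

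I expect the main obstacle to be the degree bound itself. The exponential growth of $T_L$ just outside $[-1,1]$, and hence the suboptimal $O(\sqrt n\,\log(\delta^{-1}))$ version, is routine; the delicate part is arranging the error‑amplification stage — as in \cite{buhrman1999bounds,de2008note} — so that the $\log(\delta^{-1})$ factor ends up under the square root rather than multiplying $\sqrt n$, and it is precisely this refinement that we would invoke from those references.
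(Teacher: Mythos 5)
Your proposal is correct and follows essentially the same route as the paper: both ultimately invoke the optimal bounded-error quantum search algorithm of \cite{buhrman1999bounds} and extract the polynomial via the polynomial method of \cite{beals2001quantum} (symmetrizing to a univariate polynomial in the Hamming weight), which is exactly how the paper proves the degree bound with $\log(\delta^{-1})$ inside the square root. Your additional expository material on the plain Chebyshev construction (giving the weaker $\mathcal{O}(\sqrt{n}\,\log\delta^{-1})$) mirrors the paper's own remark that Chebyshev alone is suboptimal and that the refinement must come from \cite{buhrman1999bounds,de2008note,Kahn1996inclusion}; the only small difference is that for coefficient computation the paper interpolates from evaluations at $O(n)$ integer points whereas you propose explicit Chebyshev recurrences, both of which run in $\mathrm{poly}(n)$ time.
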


We shall now review how the polynomial claimed in the Lemma is obtained in a standard way from a quantum query algorithm, see Refs.~\cite{beals2001quantum, buhrman1999bounds, de2008note} for more details. In particular, consider the problem of computing the OR of an $n$-bit string $x=x_1x_2\ldots x_n$, given quantum query access to $x$.  It is known that this function can be computed with error probability at most $\delta$ using a number of queries $T=\mathcal{O}(\sqrt{n\log(\delta^{-1})})$ \cite{buhrman1999bounds}. The probability that the quantum algorithm outputs $0$ is a multilinear polynomial $p(x_1,x_2, \ldots, x_n)$ in the input bits $x_1, x_2, \ldots, x_n$ of degree at most $2T$ \cite{beals2001quantum}. The algorithm has the feature that $p$ depends only on the Hamming weight $w=\sum_{i=1}^{n} x_i$ of the string $x$, and therefore we may write
\[
p(x_1,x_2, \ldots, x_n)=g(w)
\]
where $g:\mathbb{R}\rightarrow \mathbb{R}$ is a univariate degree $2T$ polynomial. The algorithm succeeds with probability $1$ if $x=00\ldots 0$ and errs with probability at most $\delta$ in all other cases. Therefore $g(0)=1$ and $|g(w)|\leq \delta$ for all $w=1,2,\ldots, n$. Note that in order to compute the cofficients of $g$ it suffices to evaluate it at $2T\leq n$ points $w\in \{0,1,2,\ldots, n\}$. 

The remarkable small-error $\sqrt{\log(\delta^{-1})}$ dependence \cite{buhrman1999bounds, Kahn1996inclusion}--which can also be achieved for polynomials computing symmetric functions other than OR \cite{de2008note}--is related to the fact that that we only care about the values of the polynomial $g$ at integer values of $c$. A weaker error bound scaling as $\mathcal{O}(\log(\delta^{-1}))$ can be obtained more directly using a Chebyshev polynomial \cite{linial1990approximate} (or alternatively, via a suboptimal quantum algorithm which reduces error by parallel repetition). A more direct refinement of the Chebyshev polynomial approach is used in Ref. \cite{Kahn1996inclusion} but leads to a slightly loose bound which matches Eq.~\eqref{eq:L} up to factors polylogarithmic in $n$ .

\begin{proof}[Proof of Theorem \ref{lem:sub}]

Let us define 
\[
H=\sum_{j=1}^{n} U |1\rangle\langle 1|_j U^{\dagger},
\]
where $|1\rangle\langle 1|_j$ acts nontrivially only on the $j$th qubit. Note that the eigenvalues of $H$ are the integers betwen $0$ and $n$, and that the state 
\[
|\psi\rangle=U|0^n\rangle
\]
is the unique state satisfying $H|\psi\rangle=0$. Let $\delta$ be given and consider the polynomial of degree $L=L(\delta)$ described by Lemma \ref{lem:polyapprox}. Using Eq.~\eqref{eq:fconditions} and the spectrum of $H$ we see that
\[
\| g(H)-|\psi\rangle\langle \psi| \|\leq \delta,
\]
and therefore
\[
\left| \langle 0^n | U|0^n\rangle \langle 0^n| U^{\dagger}|0^n\rangle-\langle 0^n|g(H)|0^n\rangle \right| \leq \delta.
\]
To prove the theorem it remains to show that $\langle 0^n|g(H)|0^n\rangle$ can be computed exactly using the claimed runtime.  Note that  for any positive integer $r$ we may express
\[
\langle 0^n | H^r|0^n\rangle=\langle 0^n| U \left(\sum_{j=1}^{n} |1\rangle\langle 1|_j\right)^r U^{\dagger} |0^n\rangle
\]
The right hand side is a sum of at most $n^r$ terms of the form 
\begin{equation}
\langle 0^n| U |11\ldots 1\rangle\langle 11\ldots 1|_{S} U^{\dagger} |0^n\rangle
\label{eq:terms}
\end{equation}
where $S\subset [n]$ satisfies $|S|\leq r$. Since $U$ has depth $d$, the operator
\[
U |11\ldots 1\rangle\langle 11\ldots 1|_{S} U^{\dagger}
\]
acts nontrivially on at most $2^d|S|$ qubits. Therefore each term Eq.~\eqref{eq:terms} can be computed exactly using a runtime $2^{\mathcal{O}(2^d r)}$, and  $\langle 0^n | H^r|0^n\rangle$ can be computed with runtime $n^r 2^{\mathcal{O}(2^d r)}$. Since $g(H)$ is a polynomial of degree $L$ with efficiently computable coefficients, we may compute $\langle 0^n |g(H)|0^n\rangle$ using runtime
\[
\mathrm{poly}(n)+Ln^L 2^{\mathcal{O}(2^dL)}=2^{\tilde{\mathcal{O}}(2^d\sqrt{n\log(\delta^{-1})})},
\]
where the first term on the LHS is the time used to compute the coefficients of the polynomial, and the second term is the time used to compute $\langle 0^n | H^r|0^n\rangle$ for $1\leq r\leq L$.
\end{proof}

\section{Additive approximation  for 2D and 3D circuits}
\label{sec:2D}

In this section we consider tensor product observables 
$O=O_1\otimes O_2\otimes \cdots \otimes O_n$, 
where $O_j$ are arbitrary hermitian single-qubit operators satisfying
\be
\label{TPO2}
 \|O_j\|\le 1.
\ee
As before,  our goal is to estimate  the mean value
$\mu = \la 0^n |U^\dag O U|0^n\ra$.
We prove the following.
\begin{theorem}
\label{thm:MPS}
Consider a 2D grid of $n$ qubits. Suppose $U$ is a depth-$d$ quantum circuit composed
of nearest-neighbor two-qubit gates. 
There exists a probabilistic classical algorithm that 
computes an approximation $\tilde{\mu}$ satisfying
$|\tilde{\mu} - \mu|\le \delta$ with probability at least $2/3$.
The algorithm has runtime scaling as $n\delta^{-2}2^{\calO(d^2)}$.
\end{theorem}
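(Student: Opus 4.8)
The plan is to recast $\mu$ as an inner product of two Matrix Product States linked by a qubit permutation, and then estimate that inner product by an importance-sampling scheme of Van den Nest type~\cite{Nest2009}. The first move is a dilation: for each $j$ invoke Lemma~\ref{lem:tech} to write $O_j=\|O_j\|\,(I\otimes\la 0|)\,B_j\,(I\otimes|0\ra)$ with $B_j$ a two-qubit unitary acting on qubit $j$ and a fresh ancilla $n+j$, and set $V_j=(U^\dagger\otimes I)B_j(U\otimes I)$. Exactly as in the proof of Lemma~\ref{lem:(zero-free-region)}, the operators $V_j$ pairwise commute, each $V_j$ is supported on $\calL(j)\cup\{n+j\}$, which contains at most $\ell_1+1=\calO(d^2)$ qubits for a depth-$d$ circuit on a 2D grid by Eq.~\eqref{eq:elld}, and
\[
\mu=\Big(\prod_{j=1}^n\|O_j\|\Big)\,\nu,\qquad \nu\equiv\la 0^{2n}|\,V\,|0^{2n}\ra,\qquad V=\prod_{j=1}^n V_j .
\]
Since $\prod_j\|O_j\|\le 1$, it suffices to estimate $\nu$ to additive error $\delta$.

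The heart of the argument is to realize $\nu=\la\Psi_0|W|\Psi_1\ra$, where $W$ permutes the $2n$ qubits and $\Psi_0,\Psi_1$ are MPS of bond dimension $\chi=2^{\calO(d^2)}$. To try to produce this, I would lay out the qubits along a line in column-major order of the grid, so that a layer of vertically adjacent gates is one-dimensionally local while a layer of horizontally adjacent gates becomes a ``shift by $\sqrt n$'' layer that is one-dimensionally local after conjugating by the grid-transpose permutation; thus $U$ factorizes as an alternation of $\calO(d)$ one-dimensionally local constant-depth blocks interleaved with copies of a fixed permutation. Feeding this into $\nu=\la 0^{2n}|V|0^{2n}\ra$, one collects everything on the ket side into a single state $\Psi_1$ that is the output of a \emph{one-dimensional} constant-depth circuit with $\calO(d^2)$-qubit gates (hence an MPS of bond dimension $2^{\calO(d^2)}$), collects the bra side symmetrically into $\Psi_0$, and takes $W$ to be the one permutation recording how the two one-dimensional orderings differ. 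Carrying this out correctly — and in particular verifying that the entire two-dimensional connectivity can be absorbed into the single permutation $W$ so that $\Psi_0,\Psi_1$ individually keep an $n$-independent bond dimension — is the step I expect to be the main obstacle, and it is the only place where planarity of the grid is essential. In $D=3$ this absorption fails and the bond dimension instead grows as $2^{\calO(n^{1/3})}$, which is exactly what degrades the quoted runtime for 3D circuits.

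Granting the form $\nu=\la\Psi_0|W|\Psi_1\ra$ with $\|\Psi_0\|=\|\Psi_1\|=1$, the estimator is routine. Write
\[
\nu=\sum_{x\in\{0,1\}^{2n}}\pi(x)F(x),\qquad \pi(x)=|\la x|\Psi_0\ra|^2,\qquad F(x)=\frac{\la x|W|\Psi_1\ra}{\la x|\Psi_0\ra}=\frac{\la W^{-1}x|\Psi_1\ra}{\la x|\Psi_0\ra},
\]
with the convention $F(x)=0$ when $\pi(x)=0$. Then $\mathbb{E}_{x\sim\pi}[F(x)]=\la\Psi_0|W|\Psi_1\ra=\nu$, while $\mathbb{E}_{x\sim\pi}[|F(x)|^2]\le\sum_x|\la W^{-1}x|\Psi_1\ra|^2=\|\Psi_1\|^2=1$, so $\mathrm{Var}(F)\le 1$. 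Using standard MPS routines one evaluates $\la x|\Psi_0\ra$ and $\la W^{-1}x|\Psi_1\ra$ in time $\calO(n\chi^2)$ and draws a sample $x\sim\pi$ qubit by qubit in time $\calO(n\chi^3)$, so each $F(x^i)$ costs $\calO(n\chi^3)$ overall. Taking $S=3\delta^{-2}$ independent samples and outputting $\tilde\mu=(\prod_j\|O_j\|)\,S^{-1}\sum_{i=1}^S F(x^i)$, Chebyshev's inequality gives $\Pr[\,|\tilde\mu-\mu|>\delta\,]\le\mathrm{Var}(\tilde\mu)/\delta^2\le 1/(S\delta^2)\le 1/3$, and the total runtime is $S\cdot\calO(n\chi^3)=n\,\delta^{-2}\,2^{\calO(d^2)}$, matching the claim; the identical argument with $\chi=2^{\calO(n^{1/3})}$ gives the 3D statement.
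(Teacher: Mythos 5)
You correctly identify the endgame of the paper's argument: write $\mu$ as an overlap of two Matrix Product States related by a qubit permutation $W$, then run a Van den Nest-type importance-sampling estimator with variance $\le 1$, giving $S=3\delta^{-2}$ samples and a $|\tilde\mu-\mu|\le\delta$ guarantee by Chebyshev. That part of your write-up, including the sampling and evaluation cost per MPS amplitude, matches the paper.

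The gap is the one you flag yourself, and the route you sketch to fill it does not work. Applying the entire commuting product $V=\prod_j V_j$ (equivalently $Q=\prod_j Q_j$) to $|0^{2n}\rangle$ and laying the qubits out in column-major order yields a state whose bond dimension across a central cut scales like $2^{\Theta(\sqrt{n})}$, not $2^{\calO(d^2)}$: a constant-depth 2D circuit applied to a product state produces $\Theta(\sqrt{n})$ entanglement across a bipartition, and alternating ``1D layer, grid permutation'' factorizations of $U$ do not save you, because conjugating an MPS by the grid-transpose permutation inflates the bond dimension by a power of the grid side length. You cannot ``collect everything on the ket side'' and stay low-bond-dimension, and no single permutation $W$ can absorb the obstruction. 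The paper's actual mechanism is different and is where the real work lies: work directly with the Hermitian operators $Q_j=U^\dagger(O_j\otimes I)U$ (no dilation or ancillas needed), coarse-grain the grid into $2d\times 2d$ super-sites so that each $Q_j$ fits inside a single $2\times 2$ plaquette of super-sites, lump the $Q_j$ into plaquette operators $Q_{i,j}$, and split the commuting product by the \emph{parity of the plaquette column index}: $\mu=\langle\Psi_0|\Psi_1\rangle$ with $|\Psi_b\rangle=\prod_{(i,j):\,j\equiv b\!\!\pmod 2}Q_{i,j}|0^n\rangle$, where Hermiticity of the $O_j$ (hence of the $Q_{i,j}$) is what lets one plaquette half be pulled onto the bra side. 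Odd-column plaquettes never cross the boundary between width-two vertical strips, so $\Psi_1$ is a tensor product of independent strip states, and each strip state is an MPS of bond dimension at most $D^3=2^{\calO(d^2)}$ along a snake order because at most a constant number of plaquettes straddle any 1D cut; the same for $\Psi_0$ with the shifted strips, and $W$ is then just the permutation reconciling the two snake orderings. Without this even/odd plaquette split, the tensor-product-over-strips structure -- which is what caps the bond dimension independently of $n$ -- does not emerge, so as written your proof has a genuine hole at the step you yourself singled out. Also note that your dilation introduces non-Hermitian $V_j$, which would preclude the bra/ket split the paper relies on even if you wanted to use it; it is simpler here to stay with the Hermitian $Q_j$.
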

\begin{proof}
Define operators
\be
\label{Qj}
Q_j=U^\dag (O_j\otimes I_{\mathsf{else}}) U.
\ee
Here $I_{\mathsf{else}}$ applies the identity operator to all qubits in $[n]\setminus \{j\}$.
Obviously, the operators $Q_1,\ldots,Q_n$ pairwise commute. Furthermore, $Q_j$ acts nontrivially only within a lightcone of radius $d$ centered at the
$j$-th qubit. We show an example of such lightcone for $d=2$ in Fig.~\ref{fig:MPS1}. It will be convenient to coarse-grain the lattice into super-sites with
local Hilbert space of dimension
 $D=2^{4d^2}$. Each super-site represents a block of qubits of size $2d\times 2d$.
An example for $d=2$ is shown in Fig.~\ref{fig:MPS1}.
\begin{figure}[hb!]
\centerline{\includegraphics[height=4cm]{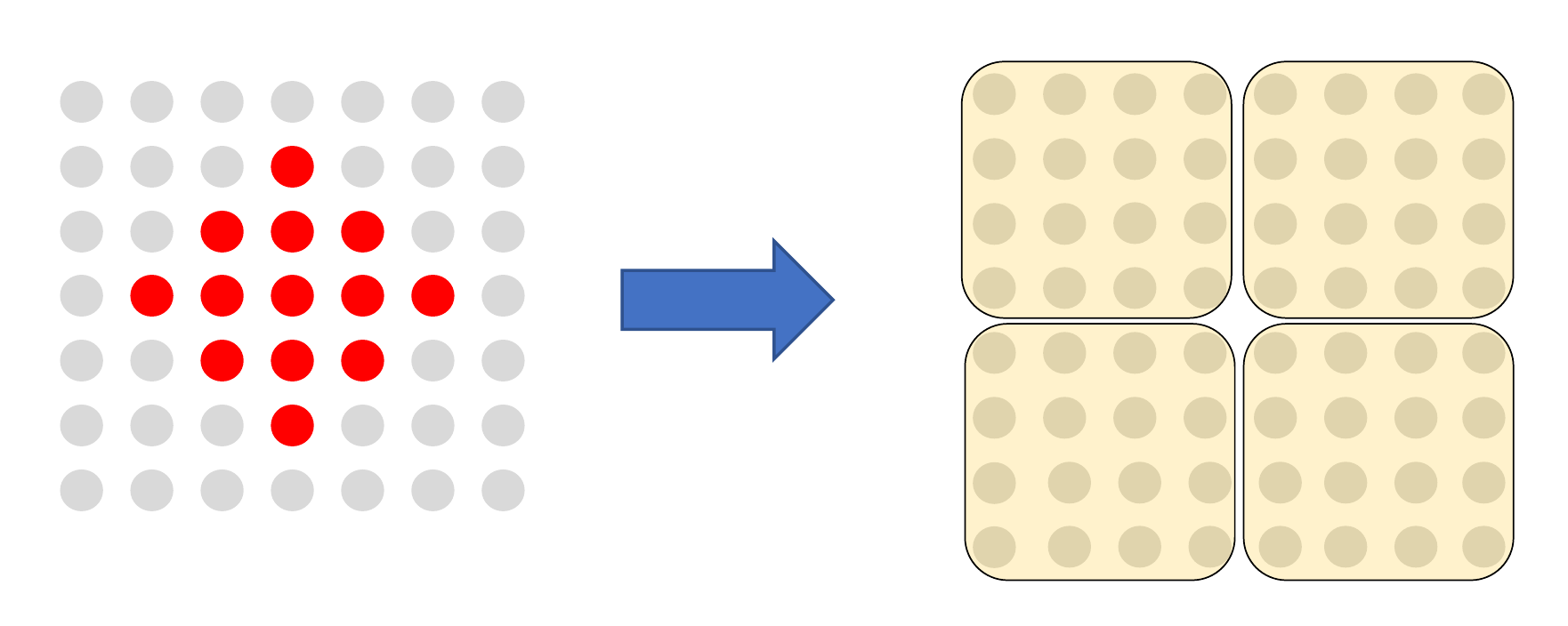}}
\caption{{\em Left}: qubits live at sites of the 2D square lattice.
The lightcone of a single qubit  generated by a depth-$2$ circuit  is highlighted in red. 
{\em Right}: the coarse-grained lattice $\Lambda$.
Each $4\times 4$ block of sites becomes a super-site of the coarse-grained
lattice. A plaquette is a $2\times 2$ cell spanning four adjacent  super-sites.
The support of any operator $Q_j$ is covered by a single plaquette.
}
\label{fig:MPS1}
\end{figure}

Let $\Lambda$ be the coarse-grained lattice. It has linear size $L\times L$, where
\[
L\approx \frac{\sqrt{n}}{2d}.
\]
We shall label sites $u\in \Lambda$ by pairs of integers $(i,j)$, where $1\le i,j\le L$. Let us agree that $i$ and $j$ label rows and columns of $\Lambda$
respectively.
Define a plaquette $p(i,j)$ as a $2\times 2$ cell of $\Lambda$
spanning super-sites $(i,j)$, $(i+1,j)$, $(i,j+1)$, and $(i+1,j+1)$.
Let $Q_{i,j}$ be the product of all operators $Q_s$ whose support is fully contained
in the plaquette $p(i,j)$. If the support of  $Q_s$ is contained
in more than one plaquette, assign $Q_s$ to one of them to avoid duplication. 
Then
\be
Q_1Q_2\cdots Q_n = \prod_{1\le i,j\le L-1} Q_{i,j}.
\ee
Here we noted that all $Q$'s pairwise commute, so the order does not matter.
This yields 
\be
\mu = \la \Psi_0|\Psi_1\ra \quad \mbox{where} \quad 
|\Psi_b\ra = \prod_{(i,j)\, : \,  j = {b \pmod 2} }\; \;
 Q_{i,j} |0^n\ra.
\ee
The product ranges over $1\le i,j\le L-1$ to ensure that all plaquettes
$p(i,j)$ are fully inside the lattice. We claim that
for each $b\in \{0,1\}$ there exists a linear order on the set of 
$n$ qubits such that 
 the state
$\Psi_b$ is Matrix Product State (MPS) with a small bond dimension
that depends only on the circuit depth $d$.
Below we prove the claim for the state $\Psi_1$ (exactly the same arguments
apply to $\Psi_0$).

Let $C_1,C_2,\ldots,C_L$ be the consecutive columns of $\Lambda$.
Assume for simplicity that $L=2K$ is an even integer.
A direct inspection shows that none of the plaquettes $p(i,j)$ 
with odd coordinate $j$ crosses the boundary between vertical strips
\[
A_\alpha=C_{2\alpha-1} C_{2\alpha}, \quad \alpha=1,2,\ldots,K.
\]
The strips $A_\alpha$ are shown in Fig.~\ref{fig:MPS2} for $K=3$.
For example, all plaquettes $p(i,1)$ are fully contained in the strip $A_1$,
plaquettes $p(i,3)$ are fully contained in $A_2$ etc.
Thus $\Psi_1$ is a tensor product of $K$ single-strip states
associated with $A_1,\ldots,A_K$,
\[
|\Psi_1\ra = |\Psi_1(A_1)\ra\otimes |\Psi_1(A_2)\ra \otimes \cdots \otimes |\Psi_1(A_K)\ra,
\]
where
\[
|\Psi_1(A_\alpha)\ra = \prod_{(i,j)\, : \, p(i,j)\subseteq A_\alpha}\; \; Q_{i,j}|0\ra_{A_\alpha}.
\]
We claim that each single-strip state $\Psi_1(A_\alpha)$ is an MPS
with bond dimension $\chi\le D^3$. 
Recall that $D$ is the local Hilbert space dimension of  each super-site. 
\begin{figure}[hb]
\centerline{\includegraphics[height=5cm]{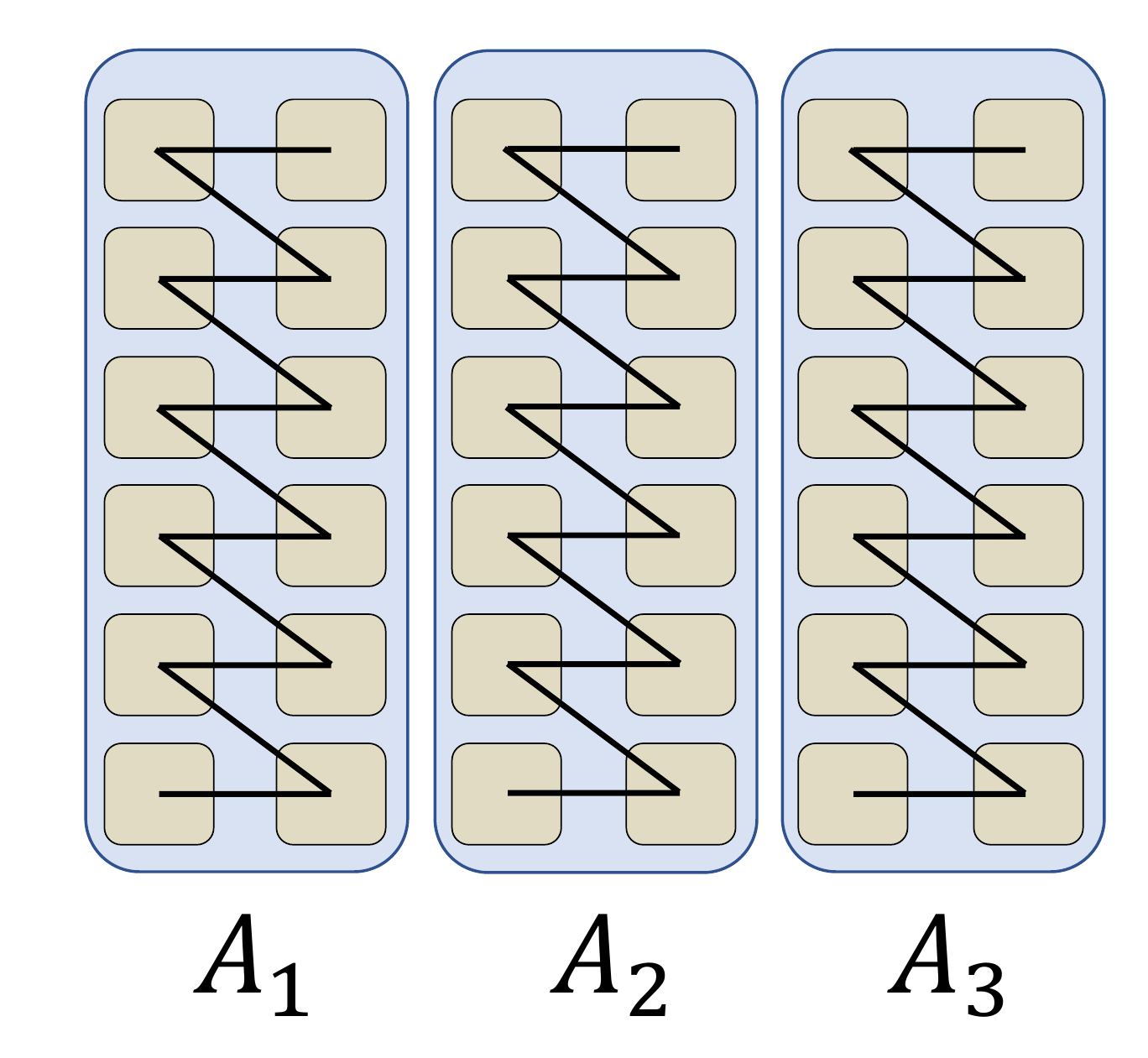}}
\caption{Coarse-grained lattice of size $L=5$
and the snake-like linear order that define Matrix Product States
$\Psi_1(A_1)$, $\Psi_1(A_2)$, $\Psi_1(A_3)$.
}
\label{fig:MPS2}
\end{figure}

Indeed, consider a fixed strip $A_\alpha$ and 
choose a snake-like linear order such that
$A_\alpha= \{1,2,\ldots,2L\}$, see Fig.~\ref{fig:MPS2}.
Consider any bipartite cut $A_\alpha =A_\alpha'A_\alpha''$,
where $A_\alpha'$ and $A_\alpha''$ are consecutive blocks of super-sites.
A direct inspection shows that there are at most two plaquettes
$p(i,j)$ that are contained in the strip $A_\alpha$ and cross the
chosen cut, see Fig.~\ref{fig:MPS3}.
\begin{figure}[h]
\centerline{\includegraphics[height=5cm]{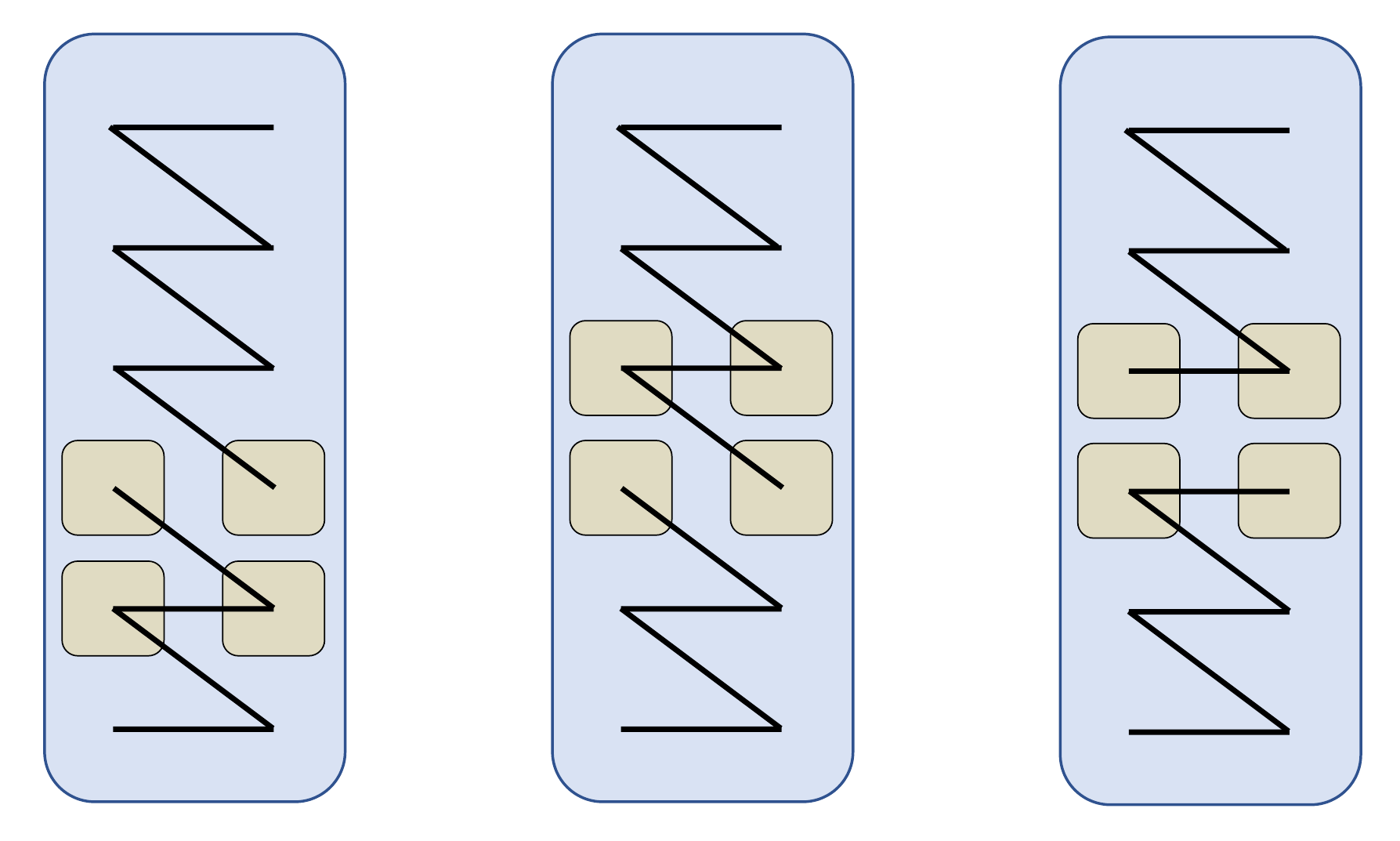}}
\caption{Examples of a bipartite cut of the chain $A_\alpha$
and  plaquettes crossing the cut. For each cut there are at most
two plaquette operators $Q_{i,j}$ capable of creating entanglement
across the cut. The state $\Psi_1(A_\alpha)$ is
obtained from the all-zero basis state by applying all plaquette operators
$Q_{i,j}$ contained in $A_\alpha$. The above shows that
$\Psi_1(A_\alpha)$ is a Matrix Product State with a small bond dimension,
$\chi\le D^3$.
}
\label{fig:MPS3}
\end{figure}
Furthermore, the corresponding plaquette
operators $Q_{i,j}$ act nontrivially on at most three super-sites located next to the cut,
see Fig.~\ref{fig:MPS3}.
Thus the Schmidt rank of $\Psi_1(A_\alpha)$ across the chosen cut is at most
$\chi=D^3$. Accordingly, $\Psi_1(A_\alpha)$ is an MPS with bound dimension $\chi$.
The same applies to the full state $\Psi_1$
since the latter is a tensor product of the states $\Psi_1(A_\alpha)$.
By the symmetry, the same arguments apply to the state $\Psi_0$.
It should be emphasized that the linear orders  in the MPS representation
of $\Psi_0$ and $\Psi_1$ are not the same. 
Thus, the desired mean value can be written as
\be
\mu = \la \Psi_0 |W|\Psi_1\ra,
\ee
where $W$ is a permutation of $n$ qubits that 
accounts for the difference between  linear orders used by $\Psi_0$ and $\Psi_1$.
The MPS description of the states $\Psi_b$ can be computed
starting from the circuit $U$ and the list of observables $O_j$
using the well-known algorithms~\cite{vidal2003efficient,yoran2006classical,jozsa2006simulation}.
It takes time $n\cdot \mathrm{poly}(\chi) = n2^{\calO(d^2)}$.

Let $\gamma_{b}=\| \Psi_{b}\|$, where $b=0,1$. 
Note that $\gamma_{b}\le 1$ since we assumed 
$\|O_j\|\le 1$ for all $j$ and $U$ is a unitary operator.
Furthermore, 
one can compute $\gamma_{b}$ in time
$\calO(n\chi^3)$ using the standard MPS algorithms~\cite{MPSreview}. Define normalized states
$|\Phi_b\ra = \gamma_{b}^{-1} |\Psi_{b}\ra$. Then 
\be
\mu = \gamma_0 \gamma_1 \la \Phi_0|W|\Phi_1\ra.
\ee
Define a probability distribution 
\[
\pi(x) =|\la x|\Phi_0\ra|^2, \qquad x\in [D]^{L^2}
\]
and a function 
\[
F(x)=\gamma_0 \gamma_1 \frac{\la x |W|\Phi_1\ra}{\la x|\Phi_0\ra}
\]
which is well-defined whenever $\pi(x)>0$.
Then $\mu$ coincides with the mean value of $F(x)$ over the distribution $\pi(x)$, 
\be
\mu = \sum_x \pi(x) F(x).
\ee
The random variable $F(x)$ has the variance
\be
\mathsf{Var}(F)\le \sum_x \pi(x) |F(x)|^2 =
(\gamma_0\gamma_1)^2 \sum_x |\la x|\Phi_1\ra|^2 =(\gamma_0\gamma_1)^2 \le 1.
\ee
Define an empirical mean value 
$\tilde{\mu}=S^{-1} \sum_{i=1}^S F(x^i)$,
where $x^1,\ldots,x^S\in [D]^{L^2}$ are independent samples from the distribution $\pi(x)$
and the number of samples is $S=3\delta^{-2}$.
By Chebyshev inequality, $|\tilde{\mu}-\mu|\le \delta$ with probability at least $2/3$. 
It remains to notice that any amplitude
$\la x|\Phi_b\ra$ can be computed using the standard MPS
algorithms~\cite{MPSreview} in time $\calO(n\chi^3)$.
Accordingly, one can compute $F(x)$ for any given string $x$ in time $\calO(n\chi^3)$. 
The probability distribution $\pi(x)=|\la x|\Phi_0\ra|^2$ can be sampled
in time $n\cdot \mathrm{poly}(\chi)$ using the algorithm of Ref.~\cite{jozsa2006simulation},
see Theorem~1 thereof. To summarize, the overall cost
of approximating $\mu$ is $n\delta^{-2} \mathrm{poly}(\chi) = n\delta^{-2} 2^{\calO(d^2)}$.
\end{proof}

Suppose now that $U$ is a geometrically local depth-$d$ quantum circuit
on a three-dimensional grid of $n$ qubits
of linear size $n^{1/3}$.
 Define the coarse-grained lattice $\Lambda$
as a two-dimensional grid, see Fig.~\ref{fig:MPS1},
where each super-site represents a block of qubits of size $2d\times 2d\times n^{1/3}$.
The lattice $\Lambda$ has size $L\times L$ with $L\approx n^{1/3}/2d$.
Now each super-site has the local Hilbert space
of dimension $D=2^{4d^2n^{1/3}}$.
Repeating exactly the same arguments as above one 
gets a representation $\mu=\la \Psi_0|W|\Psi_1\ra$, where $\Psi_b$ are
MPSs with bond dimension $\chi\le D^3 = 2^{\calO(d^2 n^{1/3})}$
and $W$ is a permutation of $n$ qubits.
Thus one can estimate $\mu$ within an additive error $\delta$
in time $n\delta^{-2} \mathrm{poly}(\chi)=\delta^{-2}2^{\calO(d^2 n^{1/3})}$.

\section{Acknowledgments}
The authors thank Robert Koenig and Kristan Temme for helpful discussions.
DG acknowledges the support of the Natural Sciences and Engineering Research Council of Canada (NSERC) under Discovery grant number RGPIN-2019-04198. DG is a CIFAR fellow in the Quantum Information Science program.
SB and RM acknowledge the support of the IBM Research
Frontiers Institute and funding from the MIT-IBM Watson AI Lab 
under the project Machine Learning in Hilbert space.

\bibliographystyle{unsrt}
\bibliography{mybib}

\end{document}